\newtheorem{theorem}{Theorem}
\newtheorem{proof}{Proof}
\newtheorem{corollary}{Corollary}
\newtheorem{definition}{Definition}
\newtheorem{lemma}{Lemma}
\newtheorem{lproof}{Proof}
\newtheorem{cproof}{Proof}
\newtheorem{claim}{Claim}
\newtheorem{clproof}{Proof}
\begin{document}


\title{The Non-Null and Force-Free Electromagnetic Field}

\author{Govind Menon}
\affiliation{Department of Chemistry and Physics\\ Troy University, Troy, Al 36082}

\date{\today}

\begin{abstract}
In this paper, we present a covariant formalism that connects solutions to force-free electrodynamics in the non-null case and foliations of spacetime. In doing so, we are also able to derive an expression of the general non-null current density vector. Just as in the null case, solutions in the non-null case can give rise to a dual solution, however, as is shown below, this can happen only when the solution describes a vacuum field. All theorems are illustrated with previously known solutions.
\end{abstract}

\pacs{Valid PACS appear here}
\maketitle


   \section{Introduction}
    Force-free electrodynamics (FFE) describes a set of nonlinear equations wherein the current density vector belongs to the kernel of the Maxwell field tensor $F$. In the magnetosphere of black holes where the plasma density is significantly lower than the electromagnetic field strength density, such conditions are expected to be satisfied. Following the work of Goldreich and Julian (\cite{GJ69}) on pulsars, Blandford and Znajek (\cite{BZ77}) extended their analysis to rotating black holes in what has become a leading mechanism for energy and angular momentum extraction. In the Blandford-Znajek model, the magnetosphere is force-free, axis-symmetric, and stationary.
  
    Subsequently, progress into FFE developed steadily. On the one hand, numerical solutions dominated the scene. For examples of relatively recent efforts see \cite{K04}, \cite{Qian_2018}, \cite{Koide_2019}, and \cite{TT14}. A theoretical understanding of FFE was also pioneered by Uchida (\cite{Uchida1}, \cite{Uchida2}) and Komissarov (\cite{K04}), just to name a few. Near horizon and extreme Kerr magnetospheres have been studied systematically by \cite{Cam_20}, \cite{cam20FF}, \cite{Armas_20}, \cite{GLS_16} and \cite{CO_16}.
    
    The first exact analytical solution to the Blandford-Znajek equations appeared in \cite{MD07}. Here the solution described a null electromagnetic field. Soon thereafter, using both the infalling and outgoing null solutions, a non-null solution was given in \cite{Minout}. By adjusting a single parameter it was possible to make this field either electrically or magnetically dominated. Finally, another magnetically dominated solution was developed in \cite{MenonTetrad15}. Both the numerical and analytical solutions, while insightful,  remained disconnected without a cohesive theoretical language until recently, when  Gralla and Jacobson published the current status of the theory of FFE in 2014 (\cite{GT14}).
 
 In \cite{CGL16}, the authors argue that in the stationary and axis-symmetric case in a Kerr background, magnetically dominated solutions are fully described by two-dimensional Lorentzian foliations of the background Kerr spacetime. In \cite{Menon_FF20}, using an adapted frame formalism, it was shown that regardless of the value of $F^2$, all force-free solutions, in an arbitrary but electrically neutral background spacetime, are determined by the existence of well-prescribed foliations of spacetime. The arguments did not rely on solutions being stationery and axis-symmetric. Here, the causal character of the foliations corresponded to the sign of $F^2$. Despite its generality, the only drawback of the adapted chart formalism was that the covariance of the results was not manifest, nor where the geometric meaning of the constraints on the leaves of the foliation. But, this result provided the first step needed to establish the connection between FFE and foliations.
 
 In a recent paper (\cite{Menon_FFN20}), the results of \cite{Menon_FF20} was completely rewritten in geometric terms for the case of the null and force-free field. Here, it was shown that the leaves of the foliation contain a unique null pregeodesic, and the properties of the null mean curvature of the null congruence determined whether the foliation would allow a null and force-free solution; but when it does, the solutions came as a class of solutions with exactly two free parameters.
 
 In this paper, we will focus on the non-null and force-free electromagnetic field. The primary focus is to rewrite the results of \cite{Menon_FF20} in a completely geometric form so that covariance of the theory of non-null FFE and its associated foliations of spacetime is manifest. After a brief introduction to the equations of FFE, we show that the existence of solutions in the magnetically dominated case depends on the existence of 2-dimensional Lorentzian foliations of spacetime such that its mean curvature 1-form is closed by the dual mean curvature field that arises from the normal distribution of the foliation. The mean curvature here is the trace of the second fundamental form, and closed-ness is in the sense of exterior calculus. In the remainder of the section, we will formulate the analogous results for the electrically dominated case. The differences here are nominal.
 
This paper is written in the vein of \cite{Menon_FF20} and \cite{Menon_FFN20}. Its sole purpose is to develop a deeper understanding of the theory of FFE. The results obtained are not necessarily to device a computational recipe for the construction of new solutions. In this spirit, for illustrative purposes, we include a recasting of the previously known non-null solutions to FFE in a Kerr background using the new language of this paper. However, at this moment, save a new vacuum solution in a Kerr background, we are unable to generate any new solutions with a current density in a Kerr background. We finally conclude his discussion by presenting a few important topics of future study stemming from our analysis.
\section{The Basic Equations Force-Free Electromagnetic field}
In general relativity, spacetime is a 4-dimensional smooth manifold ${\cal M}$ endowed with a metric $g$ of Lorentz signature $(-1, 1,1,1)$. 
In this paper, the metric is predetermined and satisfies the Einstein equation with possibly electrically neutral matter and field content. We single out the requirement for electrical neutrality because we want to account for all electromagnetic effects. 
The electromagnetic field tensor $F$ satisfies
\begin{equation}
 d  F = 0\;,
 \label{Fclosed}
\end{equation}
 and
\begin{equation}
* \;d * F = j\;.
\label{inhomMaxform}
\end{equation}
Here $*$ is the Hodge-Star operator and $d$ is the exterior derivative on forms. Also, $j$ denotes the current density dual vector.
The 3-current $J \equiv d*F$. Then  $*J=j$.
Force-free electrodynamics is defined by the constraint 
$$ F(j^\sharp, \chi) =0$$
for all contravariant vector fields $\chi$.
Here, for any 1-form $w$,
$$w^\sharp = g^{\mu\nu} \;w_\nu \;\partial_\mu\;,$$
while for any tangent vector field $\chi$,
$$\chi^\flat \equiv g_{\mu\nu} \chi^\nu dx^\mu\;.$$
The above expressions are valid in any local chart. 
The Maxwell Field tensor
$F$ is said to be magnetically dominated whenever $F^2 >0$, $F$ is  electrically dominated whenever $F^2 <0$, and finally a  force-free electromagnetic field $F$ is  null whenever $F^2 =0$. 
\vskip0.2in
In the remainder of this section, we will summarize the essential properties of FFE developed over a period of time since it came of prominence following the seminal work of Blandford and Znajek \cite{BZ77}. While some of the relevant previous results can be found in  \cite{Uchida1}, \cite{Uchida2} and \cite{Carter79}, the recent paper by Gralla and Jacobson \cite{GT14} explains all the essential equations of FFE listed below. 
\vskip0.2in
The kernel of $F$, denoted by $\ker F$, is a 2-dimensional involutive distribution of the tangent bundle satisfying the property that $i_v F =0$ whenever $v \in \ker F $. By involutive we mean that, whenever $v,w \in \ker F$, we have that $[v,w] \in \ker F$. Frobenius' theorem then implies that when a force-free $F$ exists on ${\cal M}$, spacetime can be foliated by 2-dimensional integral submanifolds of the distribution spanned by $\ker F$. The leaves of the foliation, which are the integral submanifolds of $\ker F$,  will be denoted as $ {\cal F}_a$. It is usual for the submanifolds $ {\cal F}_a$ to be referred to as a {\it field sheet}. Here $a$ belongs to some indexing set $A$. The key points here are that
$$ {\cal F}_a \cap {\cal F}_b =0 \;{\rm whenever}\;a \neq b \in A\;,\;\;\cup_{a\in A} \;{\cal F}_a ={\cal M}\;,$$
and whenever $v\in T({\cal F}_a) $ for any $a \in A$ we have that $v \in \ker F$. Additionally, a force-free electromagnetic field is a simple 2-form given by
\begin{equation}
    F= \alpha \wedge \beta\;,
    \label{simF}
\end{equation}
for some 1-forms $\alpha$ and $\beta$. The force-free condition can be reduced to
\begin{equation}
    J \wedge \alpha = 0 =J \wedge \beta\;.
    \label{FFwedgecond}
\end{equation}

\section{The Geometry of the Non-Null Force-Free Field}
\subsection{The Magnetically Dominated Force-Free Field}
In eq.(\ref{simF}), since $\alpha$ and $\beta$ span a 2-dimensional plane, we can always pick them to be orthogonal to each other and still obtain the correct expression for $F$. Therefore, without loss of generality, we set $g(\alpha, \beta)=0$. Then in the magnetically dominated case
$$F^2= 2 \alpha^2 \beta^2 >0\;,$$ where $\alpha^2 = g(\alpha, \alpha) = g^{\mu\nu}\; \alpha_\mu \alpha_\nu$, and similarly for $\beta$. I.e., the 2-dimensional plane spanned by $\alpha$ and $\beta$ is spacelike. Consequently, the spacetime metric when restricted to $\ker F$ consisting of all vectors anhilated by $\alpha$ and $\beta$ is a 2-dimensional plane of Lorentz signature. Then about any point in spacetime we can construct and inertial frame field  $(e_0, e_1, e_2, e_3)$ such that
$$g(e_\mu, e_\nu) = \eta_{\mu\nu}\;,$$
for $\mu = 0, 1, 2,3$ and where $\eta$ is the Minkowski metric
\begin{equation}
   \eta= \left(
    \begin{array}{cccc}
      -1 & 0 &0&0 \\
       0 & 1 &0&0 \\
      0 & 0&1&0 \\
      0&0&0&1
    \end{array}
  \right) \;. 
  \label{eta}
\end{equation}
$e_0$ and $e_1$ span $\ker F$ and forms an involutive distribution. 
Further, $F$ can now be written as
\begin{equation}
    F= u\;e_2 ^\flat \wedge e_3 ^\flat\;.
    \label{FrameF}
\end{equation}
In the above expression, $u$ is a yet to be determined component function for $F$. 
\vskip0.2in
Conversely, consider any foliation ${\cal F} = \{{\cal F}_a: a \in A\}$ of spacetime by 2-dimensional Lorentzian manifolds ${\cal F}_a$. By this, we mean that the metric when restricted to ${\cal F}_a$ has a Lorentz signature. We will refer to such a foliation as a 2-D {\it Lorentzian foliation} of ${\cal M}$ and  denote it by ${\cal F}^{\;2L}$. The leaves of a 2-D Lorentzian foliation will be denoted as ${\cal F}_a ^{\;2L}$. For any $p \in {\cal M}$ there exists an open set $U_p$ of $p$ and an inertial frame  $(e_0, e_1, e_2, e_3)$ on $U_p$ with the properties listed above such that the tangent space for each of the leaves of the foliation when restricted to $U_p$ is spanned by $e_0$ and $e_1$. Roughly speaking, the tangent bundle of the foliation is locally given by,
\vskip0.2in
$T({\cal F}_a ^{\;2L})|_{U_p}$

$=  \big\{(x, v_x)  \; | \; x \in U_p \cap {\cal F}_a ^{\;2L}, v_x \in {\rm span}\;\;\{e_0(x), e_1(x) \}\big\}\;.$
\vskip0.2in \noindent
We will refer to such frames as 2-D {\it  Lorentzian foliation adapted frames} and denote it by ${\bf F}_{2L}$.
Since by construction, the distribution spanned by $e_0$ and $e_1$ is integrable, from Frobenius' theorem (for details see \cite{JLee13}) we get that
\begin{equation}
    de_2 ^\flat= e_2^\flat \wedge A + e_3 ^\flat \wedge B\;,
    \label{de2flat}
\end{equation}
and
\begin{equation}
     de_3 ^\flat= e_2^\flat \wedge C + e_3 ^\flat \wedge D\;,
    \label{de3flat}
\end{equation}
for some 1-forms $A,B,C$ and $D$.

We already know that a magnetically dominated force-free electromagnetic field gives rise to a ${\cal F} ^{\;2L}$. The interesting question is, given a ${\cal F} ^{\;2L}$, what conditions must be satisfied for there to be an associated magnetically dominated, force-free electromagnetic field? As we shall see, the key players in the theory of magnetically dominated FFE are the associated {\it mean curvature field} $H$ of leaves of the foliation ${\cal F}_a ^{\;2L}$, and its dual field which we denote by $\tilde H$.
\vskip0.2in
Let $V, W$ be vector fields tangent to any ${\cal F}_a ^{\;2L}$. Then the {\it shape tensor} or {\it  second fundamental form} $\Pi$ of ${\cal F}_a ^{\;2L}$ is defined by
$$\Pi (V, W) = (\nabla_V W)^\perp\;.$$
Here $\perp$ takes the component of the vector normal to the surface ${\cal F}_a ^{\;2L}$. The mean curvature field at any point of ${\cal F}_a ^{\;2L}$ is then defined by
$$ H=\frac{1}{2} \Big[- \Pi(e_0, e_0) + \Pi(e_1, e_1)\Big]$$
in any ${\bf F}_{2L}$. It is easy to see that $H$ is independent of the frame used since it is the metrically contracted version of $\Pi$, i.e., if $h$ is the induced metric on ${\cal F}_a ^{\;2L}$, then
$$H =\frac{1}{2}\; h^{\mu\nu}\; \Pi (e_\mu, e_\nu)\;.$$
Additionally, even though the complimentary orthogonal distribution of the foliation spanned locally by $e_2$ and $e_3$ may not be an involutive distribution, we may still define a ``dual" mean curvature field by
$$\tilde H=\frac{1}{2} \Big[ \Pi(e_2, e_2) + \Pi(e_3, e_3)\Big]\;. $$
When written in ${\bf F}_{2L}$, $H$ and $\tilde H$ take the explicit form:
\vskip0.2in
$2H= \left[-g(\nabla_{e_0} e_0, e_2)+ g(\nabla_{e_1} e_1, e_2) \right] e_2$
\begin{equation}
  + \left[-g(\nabla_{e_0} e_0, e_3)+ g(\nabla_{e_1} e_1, e_3) \right] e_3\;, 
    \label{Hdef}
\end{equation}
and
\vskip0.2in
$2\tilde H= \left[-g(\nabla_{e_2} e_2, e_0)- g(\nabla_{e_3} e_3, e_0) \right] e_0$
\begin{equation}
  + \left[g(\nabla_{e_2} e_2, e_1)+ g(\nabla_{e_3} e_3, e_1) \right] e_1\;.  
   \label{Htildedef}
\end{equation}
Before we state the central result of this section, we begin with a lemma.
\vskip0.2in
\begin{lemma}
$$de_i ^\flat (e_i, e_j) = g(\nabla_{e_i} e_i, e_j)\;,$$
for
$$i=0,1 \;\;{\rm and}\;\; j=2,3\;,$$
and also when
$$i=2,3 \;\;{\rm and}\;\; j=0,1\;.$$
\end{lemma}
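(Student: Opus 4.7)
The plan is to use the coordinate-free identity for the exterior derivative of a 1-form together with the Levi-Civita connection's metric compatibility and vanishing torsion. Recall that for any 1-form $\omega$ and vector fields $X, Y$,
$$d\omega(X, Y) = X\bigl(\omega(Y)\bigr) - Y\bigl(\omega(X)\bigr) - \omega([X, Y]).$$
Applying this to $\omega = e_i^\flat$ gives
$$de_i^\flat(X, Y) = X\bigl(g(e_i, Y)\bigr) - Y\bigl(g(e_i, X)\bigr) - g(e_i, [X, Y]).$$

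Next, I would expand each Lie-derivative-type term using $\nabla g = 0$, and then invoke the torsion-free identity $\nabla_X Y - \nabla_Y X = [X, Y]$. After cancellation, the two terms involving $\nabla_X Y$ and $\nabla_Y X$ combine with the bracket term to drop out entirely, leaving the familiar identity
$$de_i^\flat(X, Y) = g(\nabla_X e_i, Y) - g(\nabla_Y e_i, X).$$
Setting $X = e_i$ and $Y = e_j$ yields
$$de_i^\flat(e_i, e_j) = g(\nabla_{e_i} e_i, e_j) - g(\nabla_{e_j} e_i, e_i).$$

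The final step is to dispose of the second term on the right. Because $(e_0, e_1, e_2, e_3)$ is an orthonormal frame, each $g(e_i, e_i) = \pm 1$ is constant along $e_j$, so differentiating this constant and using metric compatibility gives $2\,g(\nabla_{e_j} e_i, e_i) = e_j\bigl(g(e_i,e_i)\bigr) = 0$. Hence the unwanted term vanishes and the stated equality follows, with no restriction on which of the index pairs $(i,j)$ one chooses; the two index ranges listed in the statement are just the cases that will be needed later.

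There is no serious obstacle here: the computation is the standard Koszul-style manipulation, and the only genuine input beyond it is the normalization $g(e_i,e_i) = \pm 1$ of the adapted frame ${\bf F}_{2L}$. If anything deserves care, it is being explicit that metric compatibility is what eliminates $g(\nabla_{e_j} e_i, e_i)$, so that the reader sees why the identity specifically features the repeated index $i$ on the left-hand side rather than an arbitrary pair.
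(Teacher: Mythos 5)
Your proof is correct and follows essentially the same route as the paper: both reduce $de_i^\flat(e_i,e_j)$ to $g(\nabla_{e_i}e_i,e_j)-g(\nabla_{e_j}e_i,e_i)$ (the paper does this directly in abstract index notation via $(de_i^\flat)_{\mu\nu}=\nabla_\mu (e_i)_\nu-\nabla_\nu (e_i)_\mu$, you via the invariant formula for $d$ plus metric compatibility and torsion-freeness) and then kill the second term using the normalization $g(e_i,e_i)=\pm 1$ of the orthonormal frame.
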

\begin{lproof}
$$de_0 ^\flat (e_0, e_3) = (e_0 ^\mu e_3 ^\nu - e_3 ^\mu e_0 ^\nu)\; \nabla_\mu (e_0)_\nu$$
$$= e_3 ^\nu\; \nabla_{e_0} (e_0)_\nu - e_0 ^\nu\; \nabla_{e_3} (e_0)_\nu = e_3 ^\nu\; \nabla_{e_0} (e_0)_\nu\;.$$
The other expressions can be shown in a similar way.
$\blacksquare$
\end{lproof}
We are now able to state the main theorem of this paper as follows:
\vskip0.2in
\begin{theorem}
Let ${\cal F}^{\;2L}$ be any $2$-D Lorentzian foliation of ${\cal M}$ with leaves $\{{\cal F}_a ^{\;2L}, \;a \in A\}$. Let ${\bf F}_{2L}$ be a Lorentzian frame field on a starlike open set $U_p$ about any $p \in {\cal M}$. Then, up to a constant factor in $u$, $F$ given by eq.(\ref{FrameF}) is a unique magnetically dominated force-free electrodynamic field in $U_p$ such that
$$\ker F|_{U_p} = \cup_a \;T({\cal F}_a ^{\;2L} \cap U_a)$$
if and only if
\begin{equation}
    dH^\flat=-d\tilde H^\flat\;,
    \label{H+tilH}
\end{equation}
where $H (/\tilde H)$ are the mean (/dual) curvature field associated with the foliation. Moreover, in this case, 
\begin{equation}
 d (\ln u) = 2 (H+\tilde H)^\flat\;. 
 \label{eqforu}
\end{equation}
\label{Mdomthm}
\end{theorem}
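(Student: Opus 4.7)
The plan is to impose each of the two defining conditions of a magnetically dominated force-free field on the ansatz $F = u\, e_2^\flat\wedge e_3^\flat$ of eq.(\ref{FrameF}): the closedness $dF=0$, and the force-free constraint, rephrased through eq.(\ref{FFwedgecond}) as $J\wedge e_2^\flat = J\wedge e_3^\flat = 0$. With the help of eq.(\ref{de2flat}), eq.(\ref{de3flat}), the lemma above, and the definitions eq.(\ref{Hdef}) and eq.(\ref{Htildedef}), each condition will translate into a statement about two of the directional derivatives of $\ln u$. Together they will produce eq.(\ref{eqforu}), whose integrability condition on the starlike chart $U_p$ is precisely eq.(\ref{H+tilH}).

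For the closedness condition, I would expand $dF = du\wedge e_2^\flat\wedge e_3^\flat + u\,d(e_2^\flat\wedge e_3^\flat)$ and substitute eq.(\ref{de2flat}) and eq.(\ref{de3flat}); only the $A$ and $D$ pieces survive the wedges with $e_3^\flat$ and $e_2^\flat$ respectively, giving $dF = u\,e_2^\flat\wedge e_3^\flat\wedge[d(\ln u) - (A+D)]$. Since a 4-form of the shape $e_2^\flat\wedge e_3^\flat\wedge\omega$ vanishes precisely when $\omega\in {\rm span}(e_2^\flat, e_3^\flat)$, $dF=0$ is equivalent to $d(\ln u)(e_j) = (A+D)(e_j)$ for $j=0,1$. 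Evaluating $de_2^\flat$ on $(e_2, e_j)$ and $de_3^\flat$ on $(e_3, e_j)$ and invoking the lemma identifies $(A+D)(e_j)$ with $g(\nabla_{e_2}e_2, e_j) + g(\nabla_{e_3}e_3, e_j)$, which by eq.(\ref{Htildedef}) equals $2\tilde H^\flat(e_j)$.

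For the force-free side, orthonormality of the frame gives $*F = \pm u\,e_0^\flat\wedge e_1^\flat$, hence $J = d{*}F$ is known up to a global sign. Among the four basis components of the 3-form $J$, the two proportional to $e_0^\flat\wedge e_1^\flat\wedge e_j^\flat$ for $j=2,3$ are precisely those whose vanishing is equivalent to $J\wedge e_j^\flat=0$. Expanding these and applying the lemma to $de_0^\flat(e_0, e_j)$ and $de_1^\flat(e_1, e_j)$---all other contributions are killed either by the $e_0^\flat\wedge e_1^\flat$ factor or by the metric-compatibility identity $g(\nabla_v e_\mu, e_\mu)=0$---produces the combination $-g(\nabla_{e_0}e_0, e_j) + g(\nabla_{e_1}e_1, e_j) = 2H^\flat(e_j)$ of eq.(\ref{Hdef}), so the force-free condition is equivalent to $d(\ln u)(e_j) = 2H^\flat(e_j)$ for $j=2,3$. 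Combined with the previous paragraph, and using that $H^\flat$ annihilates $e_0, e_1$ while $\tilde H^\flat$ annihilates $e_2, e_3$, this is exactly eq.(\ref{eqforu}).

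Finally, eq.(\ref{eqforu}) is a gradient equation for $\ln u$; its integrability condition $d\circ d(\ln u) = 0$ is $d(H+\tilde H)^\flat = 0$, which is eq.(\ref{H+tilH}), giving necessity. For sufficiency, since $U_p$ is starlike the Poincar\'e lemma provides a primitive of any closed 1-form on $U_p$, unique up to an additive constant, so $\ln u$ exists and is determined up to such a constant---equivalently $u$ is unique up to a multiplicative constant. The main technical hurdle will be sign bookkeeping in the Lorentzian wedge calculus: the Hodge star and the coefficient $(e_0^\flat\wedge e_1^\flat\wedge e_j^\flat)(e_0, e_1, e_j) = -1$ (forced by $\eta_{00}=-1$) introduce signs that must cancel consistently for the final statement to emerge with the coefficients advertised in eq.(\ref{eqforu}).
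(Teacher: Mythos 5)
Your proposal is correct and follows essentially the same route as the paper's proof: the force-free condition $(d{*}F)\wedge e_j^\flat=0$ yields $d(\ln u)(e_j)=2H^\flat(e_j)$ for $j=2,3$, the closedness $dF=0$ yields $d(\ln u)(e_j)=2\tilde H^\flat(e_j)$ for $j=0,1$ (your detour through the Frobenius forms $A+D$ is just a slightly more explicit packaging of the same computation via the lemma), and the Poincar\'e lemma on the starlike set gives the equivalence with $d(H+\tilde H)^\flat=0$ and uniqueness of $u$ up to a multiplicative constant. The sign bookkeeping you flag does close up exactly as you anticipate, since the $-$ in the $e_0$-coefficient of $\tilde H$ cancels against $\eta_{00}=-1$ under $\flat$.
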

\begin{proof}
Let $(e_0, e_1, e_2, e_3)$ be an adapted frame ${\bf F}_{2L}$ in $U_p$. Then eq.(\ref{simF}) implies that, tentatively,
\begin{equation}
    F= u\; e_2 ^\flat \wedge e_3^\flat\;.
    \label{magF}
\end{equation}
Noting that the metric here is simply the Minkowski metric (eq.(\ref{eta})), we get that 
\begin{equation}
    *F = u \; e_0 ^\flat \wedge e_1^\flat\;.
    \label{starF}
\end{equation}
Then eq.(\ref {FFwedgecond}) requires that
$$0=(d*F) \wedge e_2 ^\flat\;,$$
which gives
$$e_3 (\ln u)=-de_0 ^\flat (e_0, e_3) + de_1 ^\flat (e_1, e_3)\;.$$
From the previous lemma, this becomes
$$e_3 (\ln u)=-g(\nabla_{e_0} e_0, e_3)+g(\nabla_{e_1} e_1, e_3)\;,$$
or
\begin{equation}
  d(\ln u) (e_3)= 2 H^\flat (e_3)\;.  
  \label{magdomeq1}
\end{equation}
Similarly
$$0=(d*F) \wedge e_3 ^\flat\;,$$
gives that
$$e_2 (\ln u)=-de_0 ^\flat (e_0, e_2) + de_1 ^\flat (e_1, e_2)\;.$$
Just as above, this becomes
\begin{equation}
  d (\ln u) (e_2)= 2 H^\flat (e_2)\;.  
  \label{magdomeq2}
\end{equation}
Finally, enforcing the homogeneous Maxwell equation $dF=0$, we get that
$$\Big(e_1(\ln u)+de_2 ^\flat (e_1, e_2) + de_3 ^\flat (e_1, e_3)\Big) e_1 ^\flat \wedge e_2 ^\flat \wedge e_3 ^\flat =0$$
and
$$\Big(e_0(\ln u)+de_2 ^\flat (e_0, e_2) + de_3 ^\flat (e_0, e_3)\Big) e_0 ^\flat \wedge e_2 ^\flat \wedge e_3 ^\flat =0\;,$$
which using the previous lemma reduces to
\begin{equation}
  d(\ln u) (e_1)= 2 \tilde H^\flat (e_1)\;.  
  \label{magdomeq3}
\end{equation}
and
\begin{equation}
  d(\ln u) (e_0)= 2 \tilde H^\flat (e_0)\;.  
  \label{magdomeq4}
\end{equation}
From eqs.(\ref{de2flat}) and (\ref{de3flat}) the components of $dF$ along
$$e_0 ^\flat \wedge e_1 ^\flat \wedge e_2 ^\flat\;{\rm and}\;\;e_0 ^\flat \wedge e_1 ^\flat \wedge e_3 ^\flat$$
 vanish. Therefore, we have met all the requirements of the force-free condition. Eqs.(\ref{magdomeq1}) through (\ref{magdomeq4}) give eq.(\ref{eqforu}). From the Poincare lemma (see \cite{Menon_FF20}), eq.(\ref{eqforu}) has a unique solution in a starlike neighborhood, up to a constant factor if and only if $d^2 \ln u =0$, and this further implies eq.(\ref{H+tilH}).
$\blacksquare$
\end{proof}
Since, given a foliation  ${\cal F}^{\;2L}$, $H$ and $\tilde H$ are well-defined vector fields, the previous theorem is a geometric result and is impervious to a change in the chosen chart or frame. The following definition gives the criterion that a 2-D Lorentzian foliation corresponds to a field sheet in the magnetically dominated case.
\vskip0.2in
\begin{definition}
A $2$-D Lorentzian foliation of ${\cal M}$ given by ${\cal F}^{\;2L}$ 
is a foliation by  field sheets for a magnetically dominated force-free field if and only if eq.(\ref{H+tilH}) holds.
\end{definition}
\vskip0.2in
\begin{theorem}
Given a field sheet foliation ${\cal F}^{\;2L}$ for a magnetically dominated force-free field, in a foliation adapted frame ${\bf F}_{2L}$, the current density vector generating the field is given by
\begin{equation}
j_M= -u\; de_0 ^\flat(e_2, e_3) \;e_0 + u\; de_1 ^\flat(e_2, e_3)\; e_1\;.
     \label{magj1}
\end{equation}
\label{jmthm1}
\end{theorem}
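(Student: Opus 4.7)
The plan is to compute the three-current $J = d*F$, apply the Hodge star to obtain $j = *J$, and then raise the index to get $j_M = j^\sharp$. From the proof of Theorem~\ref{Mdomthm} we have $*F = u\,e_0^\flat\wedge e_1^\flat$ in any adapted frame ${\bf F}_{2L}$, so the Leibniz rule gives
\begin{equation*}
J = du\wedge e_0^\flat\wedge e_1^\flat + u\,de_0^\flat\wedge e_1^\flat - u\,e_0^\flat\wedge de_1^\flat.
\end{equation*}
Since $F$ is force-free, eq.~(\ref{FFwedgecond}) gives $J\wedge e_2^\flat = 0 = J\wedge e_3^\flat$. A 3-form in four dimensions whose wedge product with both $e_2^\flat$ and $e_3^\flat$ vanishes must be divisible by $e_2^\flat\wedge e_3^\flat$, so one may write
\begin{equation*}
J = e_2^\flat\wedge e_3^\flat\wedge\bigl(\xi_0\,e_0^\flat + \xi_1\,e_1^\flat\bigr)
\end{equation*}
for two scalar functions $\xi_0,\xi_1$ to be determined.

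Next I would read off $\xi_0$ and $\xi_1$ by evaluating both expressions for $J$ on the ordered triples $(e_2,e_3,e_0)$ and $(e_2,e_3,e_1)$. The contribution $du\wedge e_0^\flat\wedge e_1^\flat$ drops out on either triple, because this 3-form lies in the span of $e_0^\flat\wedge e_1^\flat\wedge e_2^\flat$ and $e_0^\flat\wedge e_1^\flat\wedge e_3^\flat$, and each of those vanishes on any triple containing both $e_2$ and $e_3$. Only the two remaining terms contribute; a short determinant computation using $g(e_\mu,e_\nu)=\eta_{\mu\nu}$ yields $J(e_2,e_3,e_0)=u\,de_1^\flat(e_2,e_3)$ and $J(e_2,e_3,e_1)=u\,de_0^\flat(e_2,e_3)$, so $\xi_0 = -u\,de_1^\flat(e_2,e_3)$ and $\xi_1 = u\,de_0^\flat(e_2,e_3)$.

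The final step is to apply the Hodge star, using the identities $*(e_2^\flat\wedge e_3^\flat\wedge e_0^\flat) = -e_1^\flat$ and $*(e_2^\flat\wedge e_3^\flat\wedge e_1^\flat) = -e_0^\flat$; these follow from the defining relation $\alpha\wedge *\beta = \langle\alpha,\beta\rangle_g\,\mathrm{vol}$ evaluated in the orthonormal adapted frame. One obtains $j = *J = u\,de_1^\flat(e_2,e_3)\,e_1^\flat - u\,de_0^\flat(e_2,e_3)\,e_0^\flat$. Raising the index by $\sharp$, which sends $e_0^\flat\mapsto e_0$ and $e_1^\flat\mapsto e_1$, delivers the formula claimed for $j_M$. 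The only genuine obstacle in the argument is sign bookkeeping under the Lorentzian Hodge star and the flat/sharp maps; no new geometric input beyond Theorem~\ref{Mdomthm} is needed.
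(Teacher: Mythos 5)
Your proposal is correct and follows essentially the same route as the paper: expand $J=d*F$ from $*F=u\,e_0^\flat\wedge e_1^\flat$ in the adapted frame, use the force-free condition to kill the components along $e_0^\flat\wedge e_1^\flat\wedge e_2^\flat$ and $e_0^\flat\wedge e_1^\flat\wedge e_3^\flat$, and apply the Hodge star to what remains. The only cosmetic difference is that you re-derive the vanishing of those components from $J\wedge e_2^\flat=0=J\wedge e_3^\flat$ via a divisibility argument, whereas the paper cites eqs.~(\ref{magdomeq1})--(\ref{magdomeq2}) already established in the proof of Theorem~\ref{Mdomthm}; the sign bookkeeping in your Hodge-star identities checks out against the paper's conventions.
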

\begin{proof}
From eq.(\ref{starF}), we see that

$d*F = \Big(e_3 (u)+u\;de_0 ^\flat (e_0, e_3) -u\; de_1 ^\flat (e_1, e_3)\Big)\;e_0 ^\flat \wedge e_1 ^\flat \wedge e_3 ^\flat$

$+\Big(e_2 (u)+u\;de_0 ^\flat (e_0, e_2) -u\; de_1 ^\flat (e_1, e_2)\Big)\;e_0 ^\flat \wedge e_1 ^\flat \wedge e_2 ^\flat$

$+ u\; de_0 ^\flat (e_2, e_3)\;e_1 ^\flat \wedge e_2 ^\flat \wedge e_3 ^\flat - u\; de_1 ^\flat (e_2, e_3)\;e_0 ^\flat \wedge e_2 ^\flat \wedge e_3 ^\flat$.
\vskip0.2in \noindent
Therefore, from eqs. (\ref{magdomeq1}) and (\ref{magdomeq2}) we get that
$$d*F =  u\; de_0 ^\flat (e_2, e_3)\;e_1 ^\flat \wedge e_2 ^\flat \wedge e_3 ^\flat - u\; de_1 ^\flat (e_2, e_3)\;e_0 ^\flat \wedge e_2 ^\flat \wedge e_3 ^\flat\;.$$
Noting that $j = *d*F$, we get the needed result.
$\blacksquare$
\end{proof}
It is possible to write a simpler expression for $j$ that will prove useful.
\begin{theorem}
In a foliation adapted frame ${\bf F}_{2L}$,
\begin{equation}
  j_M= u\; g([e_2, e_3], e_0)\;e_0 - u\; g( [e_2,e_3], e_1 )\; e_1\;. 
  \label{magj2}
\end{equation}
\label{jmthm2}
\end{theorem}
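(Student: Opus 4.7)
The plan is to substitute the wedge-derivative quantities $de_0^\flat(e_2,e_3)$ and $de_1^\flat(e_2,e_3)$ appearing in Theorem \ref{jmthm1} with expressions involving the Lie bracket $[e_2,e_3]$, using only the fact that $(e_0,e_1,e_2,e_3)$ is an orthonormal frame.

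First I would apply the standard Cartan identity
\[
d\omega(X,Y)=X\bigl(\omega(Y)\bigr)-Y\bigl(\omega(X)\bigr)-\omega([X,Y])
\]
to the 1-form $\omega=e_i^\flat$ with $X=e_2$ and $Y=e_3$. This gives
\[
de_i^\flat(e_2,e_3)=e_2\bigl(g(e_i,e_3)\bigr)-e_3\bigl(g(e_i,e_2)\bigr)-g(e_i,[e_2,e_3]).
\]
Next, for $i=0,1$ the frame is orthonormal in the Minkowski sense, so $g(e_i,e_2)=g(e_i,e_3)=0$ identically; the first two terms therefore vanish and I obtain the clean identity
\[
de_i^\flat(e_2,e_3)=-g(e_i,[e_2,e_3]),\qquad i=0,1.
\]

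Finally, I would substitute these two equalities into the expression (\ref{magj1}) for $j_M$ provided by Theorem \ref{jmthm1}: the $e_0$-coefficient $-u\,de_0^\flat(e_2,e_3)$ becomes $u\,g([e_2,e_3],e_0)$, and the $e_1$-coefficient $u\,de_1^\flat(e_2,e_3)$ becomes $-u\,g([e_2,e_3],e_1)$, yielding exactly (\ref{magj2}). There is no real obstacle here; the only subtlety is remembering that the reduction of the Cartan formula to a pure bracket expression relies crucially on the orthogonality of $e_0,e_1$ to the normal distribution spanned by $e_2,e_3$, which is guaranteed by the construction of ${\bf F}_{2L}$.
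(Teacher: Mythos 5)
Your proof is correct and follows essentially the same route as the paper: both arguments reduce Theorem \ref{jmthm1} to the identity $de_i^\flat(e_2,e_3)=-g(e_i,[e_2,e_3])$ for $i=0,1$ and then substitute into eq.~(\ref{magj1}). The only cosmetic difference is that you obtain this identity from Cartan's formula for $d$ on 1-forms together with the orthonormality of the frame, whereas the paper derives it by expanding $de_i^\flat$ with the Levi-Civita connection and invoking metric compatibility and torsion-freeness; the two computations are equivalent.
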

\begin{proof}
$$de_0 ^\flat(e_2, e_3) = e_2 ^\mu e_3 ^\nu \;\big(\nabla_\mu (e_0)_\nu - \nabla_\nu (e_0)_\mu\big)$$
$$= g(\nabla_{e_2} e_0, e_3) - g(\nabla_{e_3} e_0, e_2)$$
$$=- g(\nabla_{e_2} e_3, e_0) + g(\nabla_{e_3} e_2, e_0) = g(\nabla_{e_3} e_2-\nabla_{e_2} e_3, e_0)\;.$$
Since the Levi-Civita connection is torsion free, we finally get that
$$de_0 ^\flat(e_2, e_3) = g(e_0, [e_3, e_2])\;.$$
Similarly
$$de_1 ^\flat(e_2, e_3)= g(e_1, [e_3,e_2])\;.$$
$\blacksquare$
\end{proof}
\vskip0.2in
\begin{corollary}
If the distribution spanned by $e_2$ and $e_3$ is involutive, then the magnetically dominated field given by eq.(\ref{FrameF}) is a vacuum solution.
\end{corollary}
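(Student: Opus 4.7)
The plan is to read this corollary as a direct consequence of Theorem \ref{jmthm2}, which gives the explicit expression
$$ j_M = u\, g([e_2,e_3], e_0)\, e_0 - u\, g([e_2,e_3], e_1)\, e_1\;. $$
A vacuum solution is, by definition, one for which the current density vector vanishes, so the task reduces to showing that both of the scalar coefficients $g([e_2,e_3], e_0)$ and $g([e_2,e_3], e_1)$ are zero under the hypothesis of involutivity.

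First I would invoke the hypothesis: if the distribution spanned by $e_2$ and $e_3$ is involutive, then by definition the Lie bracket $[e_2, e_3]$ is a section of that same distribution, i.e., $[e_2, e_3] \in \operatorname{span}\{e_2, e_3\}$ at each point of $U_p$. Next I would use the fact that $(e_0, e_1, e_2, e_3)$ is a Lorentzian frame, so that $g(e_2, e_0) = g(e_3, e_0) = g(e_2, e_1) = g(e_3, e_1) = 0$ from the diagonal form of $\eta$ in eq.(\ref{eta}). Expanding $[e_2, e_3] = a\, e_2 + b\, e_3$ for some functions $a, b$ on $U_p$ and pairing against $e_0$ and $e_1$ with $g$ therefore yields zero in both cases.

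Substituting these vanishing inner products into the expression from Theorem \ref{jmthm2} gives $j_M = 0$ throughout $U_p$, which is exactly the statement that the associated magnetically dominated solution $F = u\, e_2^\flat \wedge e_3^\flat$ carries no source and hence satisfies the vacuum Maxwell equations $dF = 0$ and $d*F = 0$. There is essentially no obstacle here: the corollary is a one-line consequence of the structural formula in Theorem \ref{jmthm2} together with the orthonormality of the adapted frame. The only conceptual point worth emphasizing is that involutivity of the complementary distribution $\operatorname{span}\{e_2, e_3\}$, which is in general not guaranteed (in contrast to $\operatorname{span}\{e_0, e_1\} = \ker F$, which is automatically involutive), is precisely the geometric condition that kills the current.
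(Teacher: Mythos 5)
Your proof is correct and follows the paper's own argument exactly: the paper likewise invokes eq.~(\ref{magj2}) and notes that involutivity forces $g([e_3,e_2],e_0)=0=g([e_3,e_2],e_1)$, so that $j_M=0$. You have merely spelled out the intermediate step (writing $[e_2,e_3]=a\,e_2+b\,e_3$ and using orthonormality of the frame) that the paper leaves implicit.
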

\begin{cproof}
In this case $$g([e_3, e_2], e_0) = 0=g([e_3, e_2], e_1)\;.$$ The result follows from the expression in eq.(\ref{magj2}).
$\blacksquare$
\end{cproof}
The expression for $j_M$ as given in eq.(\ref{magj1}) (or equivalently in eq.(\ref{magj2})) is seemingly dependent on the particular choices of the frame field ${\bf F}_{2L}$.
There are two independent transformations for the choices of the frame field. Consider first a transformation of the type
\begin{equation}
  \begin{pmatrix}
\bar e_2\\
\bar e_3\\
\end{pmatrix}= O(2) \begin{pmatrix} e_2\\
e_3 \\
\end{pmatrix}\;,
\label{23rot}
\end{equation}
where $O(2)$ is a $2 \times 2$ orthogonal matrix. In this case $(e_0, e_1, \bar e_2, \bar e_3)$ would be an equally valid frame field. Additionally, one could also have a transformation of type
\begin{equation}
  \begin{pmatrix}
\bar e_0\\
\bar e_1\\
\end{pmatrix}= \Lambda \begin{pmatrix} e_0\\
e_1 \\
\end{pmatrix}\;,
\label{12rot}
\end{equation}
where $\Lambda $ is a $2 \times 2$ general homogeneous transformation satisfying
$$\eta_2 = \Lambda ^T\; \eta_2 \;\Lambda\;,$$
where $\eta_2$ is the 2-D Lorentzian metric given by
\begin{equation}
   \eta= \left(
    \begin{array}{cccc}
      -1 & 0  \\
       0 & 1  \\
    \end{array}
  \right) \;. 
  \label{eta2}
\end{equation}
Under the above transformation, $(\bar e_0, \bar e_1,  e_2,  e_3)$ would be an equally valid frame field. 
\vskip0.2in
\begin{theorem}
The expression for the current density vector in eq.(\ref{magj1})  does not depend on the chosen frame ${\bf F}_{2L}$.
\end{theorem}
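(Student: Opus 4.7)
The plan is to verify invariance separately under the two independent frame changes exhibited in eqs.~(\ref{23rot}) and (\ref{12rot}), working with the cleaner Lie-bracket form (\ref{magj2}) of Theorem~\ref{jmthm2} rather than the exterior-derivative form (\ref{magj1}).

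For the $O(2)$ rotation $\bar e_a = O_{ab}\, e_b$ on $a,b\in\{2,3\}$, I would first observe that since $F$ is a physical object, $F = u\, e_2^\flat\wedge e_3^\flat = \bar u\, \bar e_2^\flat\wedge \bar e_3^\flat$, and the identity $\bar e_2^\flat\wedge\bar e_3^\flat = \det(O)\, e_2^\flat\wedge e_3^\flat$ forces $\bar u = \det(O)\,u$. Expanding $[\bar e_2,\bar e_3]$ via the Leibniz rule for vector fields splits it into two pieces: the $[e_c,e_d]$ contributions collapse (after cancellations against $[e_2,e_2]$ and $[e_3,e_3]$) to $\det(O)\,[e_2,e_3]$, whereas the terms that come from $\bar e_a$-derivatives of the matrix entries $O_{cd}$ are pointwise linear combinations of $e_2$ and $e_3$ themselves. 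Since $e_2,e_3$ are $g$-orthogonal to $e_0,e_1$, those Leibniz remainders vanish under $g(\,\cdot\,,e_k)$ for $k=0,1$, so $g([\bar e_2,\bar e_3],e_k)=\det(O)\,g([e_2,e_3],e_k)$. Combined with $\bar u=\det(O)\,u$ and $(\det O)^2=1$, substitution into (\ref{magj2}) gives $\bar j_M = j_M$.

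For the Lorentz rotation $(\bar e_0,\bar e_1)^T = \Lambda\,(e_0,e_1)^T$, the vectors $e_2,e_3$ and hence $u$ are unchanged, so only the projections in (\ref{magj2}) need to be checked. Writing $\Lambda=\begin{pmatrix}\alpha&\beta\\\gamma&\delta\end{pmatrix}$, the defining relation $\eta_2=\Lambda^T\eta_2\Lambda$ supplies the three identities $\alpha^2-\gamma^2=1$, $\delta^2-\beta^2=1$, and $\alpha\beta=\gamma\delta$. Setting $A=g([e_2,e_3],e_0)$ and $B=g([e_2,e_3],e_1)$ and substituting $\bar e_0,\bar e_1$ in for $e_0,e_1$ in (\ref{magj2}), the coefficient of $e_0$ in $\bar j_M$ collects to $u[(\alpha^2-\gamma^2)A+(\alpha\beta-\gamma\delta)B]=uA$ and the coefficient of $e_1$ to $u[(\alpha\beta-\gamma\delta)A+(\beta^2-\delta^2)B]=-uB$, recovering $j_M$ exactly.

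The main obstacle is the position-dependence of the $O(2)$ rotation in the first step: naively one expects extra Leibniz terms in $[\bar e_2,\bar e_3]$ to spoil the transformation law. The saving observation is purely algebraic—those extra terms lie in the two-plane spanned by $e_2$ and $e_3$, which is precisely the orthogonal complement of the directions $e_0,e_1$ that appear in (\ref{magj2}). Once this is in hand, both frame freedoms reduce to short direct computations, and no geometric input beyond the orthogonality of ${\bf F}_{2L}$ is required.
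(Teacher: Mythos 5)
Your proof is correct and follows essentially the same strategy as the paper's: verify invariance of the current-density formula under the two independent frame freedoms (\ref{23rot}) and (\ref{12rot}) by direct substitution. You differ only in bookkeeping --- you use the bracket form (\ref{magj2}) and a general $\Lambda$ satisfying $\eta_2=\Lambda^T\eta_2\Lambda$ rather than the paper's explicit boost, and you are in fact more careful than the paper about the position-dependent $O(2)$ rotation (the Leibniz remainders killed by orthogonality to $e_0,e_1$, and the factor $\det O$ absorbed into $u$), details the paper dispatches with ``in exactly the same way.''
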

\begin{proof}
We have to show that the form of $j$ looks the same under transformations given by eqs. (\ref{23rot}) and (\ref{12rot}). Consider first a choice of simple boost for $\Lambda$. I.e.,
$$\Lambda = \begin{pmatrix}
\gamma & -\beta \gamma\\
-\beta \gamma & \gamma\\
\end{pmatrix}\;,$$
where, as usual (note, we are happy to include non-orthochronous transformations here)
$$\gamma = \pm \frac{1}{\sqrt{1-\beta^2}}\;.$$
Clearly
$$d \bar e_0 ^\flat (e_2, e_3) = \gamma\; d  e_0 ^\flat (e_2, e_3) - \beta \gamma\; d  e_1 ^\flat (e_2, e_3)\;,$$
and
$$d \bar e_1 ^\flat (e_2, e_3) = -\beta \gamma\; d  e_0 ^\flat (e_2, e_3) + \gamma\; d  e_1 ^\flat (e_2, e_3)\;.$$
In this case
$$\frac{\bar j_M}{u} = - \;d\bar e_0 ^\flat(e_2, e_3) \;\bar e_0 + \; d\bar e_1 ^\flat(e_2, e_3)\; \bar e_1$$
$$ = - \Big[ \gamma\; d  e_0 ^\flat (e_2, e_3) - \beta \gamma\; d  e_1 ^\flat (e_2, e_3)\Big] (\gamma e_0-\beta\gamma e_1)$$
$$+ \Big[-\beta \gamma\; d  e_0 ^\flat (e_2, e_3) + \gamma\; d  e_1 ^\flat (e_2, e_3)\Big](-\beta \gamma e_0+\gamma e_1)$$
$$=-\; de_0 ^\flat(e_2, e_3) \;e_0 + \; de_1 ^\flat(e_2, e_3)\; e_1 = \frac{j_M}{u}\;.$$
The expression for $j_M$ is also invariant under the choice $\Lambda = \eta_2$. I.e., $j_M$ is invariant under eq.(\ref{12rot}). In exactly the same way, $j_M$ is invariant under eq.(\ref{23rot}).
$\blacksquare$
\end{proof}
It is easy to see that in the magnetically dominated case, in any ${\bf F}_{2L}$, the energy momentum tensor is given by
$$T = u^2 \left[e_2 \otimes e_2 + e_3 \otimes e_3 - \frac{1}{2}\eta \right]\;.$$

\subsection{The Electrically Dominated Force-Free Field}
Once again picking $\alpha$ and $\beta$ to be orthogonal to each other, since in the electrically dominated case
$$F^2= 2 \alpha^2 \beta^2 <0\;,$$
we get that the 2-dimensional plane spanned by $\alpha$ and $\beta$ has a Lorentzian signature. Consequently, the spacetime metric when restricted to $\ker F$ in the electrically dominated case is spacelike. Then, as before, we can construct a local inertial frame field  $(e_0, e_1, e_2, e_3)$ such that
$e_2$ and $e_3$ span $\ker F$ and forms an involutive distribution. 
Further, $F$ can now be written as
\begin{equation}
    F= \tilde u\;e_0 ^\flat \wedge e_1 ^\flat\;.
\label{FrameFElec}
\end{equation}
\vskip0.2in
For the electrically dominated case, we will denote 2-D Riemannian foliations of ${\cal M}$ by ${\cal F}^{\;2R}$, and the leaves of the foliation will be denoted as ${\cal F}_a ^{\;2R}$. Then, for any $p \in {\cal M}$ there exists an open set $U_p$ of $p$ and an inertial frame  $(e_0, e_1, e_2, e_3)$ on $U_p$ with the properties listed above such that the tangent space for each of the leaves of the foliation when restricted to $U_p$ is spanned by $e_2$ and $e_3$. 
We will refer to such frames as 2-D {\it  Riemannian foliation adapted frames} and denote it by ${\bf F}_{2R}$.
\vskip0.2in
Analogous to the magnetically dominated force-free electromagnetic field, the electrically dominated force-free field gives rise to a ${\cal F} ^{\;2R}$. Given a ${\cal F} ^{\;2R}$, it is now meaningful to ask, what conditions must be satisfied for there to be an associated electrically dominated, force-free electromagnetic field? For continuity of discussion, we do not redefine $H$ and $\tilde H$ as defined in eq.(\ref  {Hdef}) and (\ref  {Htildedef}). However, it is important to remember that this time it is $\tilde H$ that is the mean curvature field of the integral submanifolds of $\ker F$. The result in the electrically dominated case can be obtained by repeating the previous computation in the proof of theorem \ref{Mdomthm} and is similar in appearance. For this reason, we simply state the result without proof.

\begin{theorem}
Let ${\cal F}^{\;2R}$ be any $2$-D Riemannian foliation of ${\cal M}$ with leaves $\{{\cal F}_a ^{\;2R}, \;a \in A\}$. 
Let ${\bf F}_{2R}$ be a Riemannian frame field on a starlike open set $U_p$ about any $p \in {\cal M}$. Then, up to a constant factor in $\tilde u$, $F$ given by eq.(\ref{FrameFElec}) is a unique electrically dominated force-free electrodynamic field in $U_p$ such that
$$\ker F|_{U_p} = \cup_a \;T({\cal F}_a ^{\;2R} \cap U_a)$$
if and only if
\begin{equation}
    dH^\flat=-d\tilde H^\flat\;,
\end{equation}
where $H$ and $\tilde H$ are the mean curvature fields associated with the foliation. Moreover, in this case, 
\begin{equation}
 d (\ln \tilde u) = 2 (H+\tilde H)^\flat\;. 
 \label{eqforutilde}
\end{equation}
\label{Edomthm}
\end{theorem}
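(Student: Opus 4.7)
The plan is to mirror the proof of Theorem \ref{Mdomthm} verbatim, swapping the roles of the pair $(e_0,e_1)$ and $(e_2,e_3)$. Since $\ker F$ is now spanned by $e_2, e_3$, the involutive distribution is the one tangent to the foliation, and Frobenius applied to this distribution supplies decompositions
\begin{equation*}
de_0^\flat = e_0^\flat \wedge A' + e_1^\flat \wedge B', \qquad de_1^\flat = e_0^\flat \wedge C' + e_1^\flat \wedge D',
\end{equation*}
analogous to eqs.~(\ref{de2flat}) and (\ref{de3flat}). Starting from $F = \tilde u\, e_0^\flat \wedge e_1^\flat$, the Hodge dual is $*F = -\tilde u\, e_2^\flat \wedge e_3^\flat$, which has exactly the structural form of the magnetic $F$ up to an innocuous overall sign.

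First I would impose the homogeneous Maxwell equation $dF=0$. Thanks to the Frobenius decompositions above, the coefficients of $dF$ along $e_0^\flat \wedge e_2^\flat \wedge e_3^\flat$ and $e_1^\flat \wedge e_2^\flat \wedge e_3^\flat$ vanish automatically; requiring the surviving components along $e_0^\flat \wedge e_1^\flat \wedge e_2^\flat$ and $e_0^\flat \wedge e_1^\flat \wedge e_3^\flat$ to vanish, and simplifying via the preceding lemma with indices $i=2,3$ and $j=0,1$, yields
\begin{equation*}
d(\ln \tilde u)(e_2) = 2 H^\flat(e_2), \qquad d(\ln \tilde u)(e_3) = 2 H^\flat(e_3).
\end{equation*}
These are the electric analogues of eqs.~(\ref{magdomeq1}) and (\ref{magdomeq2}), now issued by $dF=0$ rather than by the force-free condition.

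Next I would impose the force-free constraint $J \wedge e_0^\flat = 0 = J \wedge e_1^\flat$ with $J = d*F$. Because $*F$ shares the structural form of the magnetic $F$, an identical computation, this time invoking the lemma with $i=0,1$ and $j=2,3$, delivers
\begin{equation*}
d(\ln \tilde u)(e_0) = 2 \tilde H^\flat(e_0), \qquad d(\ln \tilde u)(e_1) = 2 \tilde H^\flat(e_1).
\end{equation*}
Since $H^\flat$ vanishes on $e_0, e_1$ and $\tilde H^\flat$ vanishes on $e_2, e_3$, these four scalar identities assemble into the single 1-form identity $d(\ln \tilde u) = 2(H+\tilde H)^\flat$, which is eq.~(\ref{eqforutilde}). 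Invoking the Poincar\'e lemma on the starlike neighborhood $U_p$ then yields a solution $\tilde u$ unique up to a multiplicative constant precisely when the integrability condition $d^2 \ln \tilde u = 0$ holds, i.e.\ $dH^\flat = -d\tilde H^\flat$.

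The main obstacle I anticipate is careful sign bookkeeping, both for the Hodge star acting on the now-Lorentzian plane spanned by $e_0^\flat$ and $e_1^\flat$, and for the antisymmetries that appear when the Frobenius decompositions are substituted into $dF$. However, because every equation is homogeneous in $\tilde u$ and each component is processed by the same algebra as in Theorem \ref{Mdomthm}, no sign survives into the final 1-form identity, and the conclusion is identical in shape to the magnetically dominated case.
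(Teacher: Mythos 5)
Your proposal is correct and takes essentially the same approach the paper intends: the paper states Theorem \ref{Edomthm} without proof, remarking only that it follows by repeating the computation of Theorem \ref{Mdomthm}, and your mirrored argument is exactly that computation. In particular, you correctly identify the role swap — $dF=0$ now delivers the $H^\flat$ (i.e.\ $e_2,e_3$) components of $d\ln\tilde u$ while the force-free condition $J\wedge e_0^\flat=J\wedge e_1^\flat=0$ delivers the $\tilde H^\flat$ components — and you correctly note that the overall sign in $*F=-\tilde u\,e_2^\flat\wedge e_3^\flat$ is harmless because every equation is homogeneous in $\tilde u$.
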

The field sheet in the electrically dominated case can now be stated in the following way.
\vskip0.2in 
\begin{definition}
A $2$-D Riemannian foliation of ${\cal M}$ given by ${\cal F}^{\;2R}$ is a foliation by  field sheets for an electrically dominated force-free field if and only if eq.(\ref{H+tilH}) holds.
\end{definition}
By the appearance of eqs.(\ref{eqforu}) and (\ref{eqforutilde}) it might appear that we may obtain a dual solution under a mild restriction as in case of the null and force-free electromagnetic field (see \cite{Menon_FFN20}). This is true here as well, however, as we shall see, this happens only in the vacuum case.
\begin{theorem}
Let ${\cal F}^{\;2R}$ be a field sheet foliation for an electrically dominated force free field on ${\cal M}$. Let $(e_0, e_1, e_2, e_3)$ be a foliation adapted frame ${\bf F}_{2R}$ on some starlike chart on $U_p$ centered about $p \in {\cal M}$ with an associated field $F_E$ as given by eq.(\ref{FrameFElec}). Suppose  $e_0$ and $e_1$ form an involutive distribution, then $F_M$ given by eq.(\ref{FrameF}) is a magnetically dominated solution for $u = \tilde u$ in $U_p$. Moreover, in this case, both solutions describe a vacuum field and up to a negative sign are Hodge star duals of each other.
\label{vacdual}
\end{theorem}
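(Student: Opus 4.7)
The plan is to exploit the near-symmetry between Theorems \ref{Mdomthm} and \ref{Edomthm}: once the additional distribution $\mathrm{span}\{e_0,e_1\}$ is assumed involutive, the same orthonormal frame $(e_0,e_1,e_2,e_3)$ adapts simultaneously to the given ${\cal F}^{\;2R}$ and to a new 2-D Lorentzian foliation ${\cal F}^{\;2L}$ whose leaves are the integral surfaces of $\{e_0,e_1\}$. The fields $H$ and $\tilde H$ defined by eqs.(\ref{Hdef}) and (\ref{Htildedef}) depend only on the frame, not on a choice of which pair spans the leaves, so both theorems refer to the same $H^\flat$ and $\tilde H^\flat$. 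Since $dH^\flat=-d\tilde H^\flat$ holds by hypothesis (${\cal F}^{\;2R}$ is a field sheet foliation), the Lorentzian foliation ${\cal F}^{\;2L}$ satisfies the criterion of Theorem \ref{Mdomthm}. This would immediately produce a magnetically dominated force-free solution of the form $F_M=u\,e_2^\flat\wedge e_3^\flat$ on $U_p$.

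Next I would equate the two component equations. By Theorem \ref{Mdomthm}, $d(\ln u)=2(H+\tilde H)^\flat$, while by Theorem \ref{Edomthm}, $d(\ln \tilde u)=2(H+\tilde H)^\flat$. Hence $d(\ln u)=d(\ln\tilde u)$ on the starlike $U_p$, which by Poincar\'e forces $u=c\,\tilde u$ for a nonzero constant $c$. Since both solutions are determined only up to a multiplicative constant, one may take $u=\tilde u$, establishing the first claim.

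For the vacuum claim I would apply the explicit current formulas. Theorem \ref{jmthm2} gives
\[
j_M=u\,g([e_2,e_3],e_0)\,e_0-u\,g([e_2,e_3],e_1)\,e_1.
\]
Because $\{e_2,e_3\}$ is involutive (it spans the leaves of ${\cal F}^{\;2R}$), $[e_2,e_3]\in\mathrm{span}\{e_2,e_3\}$, so both projections vanish and $j_M=0$. Running the computation of Theorems \ref{jmthm1}--\ref{jmthm2} in the dual (electrically dominated) setting yields the same formula with the roles of $\{e_0,e_1\}$ and $\{e_2,e_3\}$ interchanged, so $j_E$ depends on the components of $[e_0,e_1]$ along $e_2,e_3$; by the hypothesis $[e_0,e_1]\in\mathrm{span}\{e_0,e_1\}$, hence $j_E=0$ too. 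Both $F_E$ and $F_M$ are therefore vacuum solutions.

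Finally, for the Hodge duality I would use the calculation already performed in the proof of Theorem \ref{Mdomthm}: in the chosen orthonormal Lorentzian frame, $*(e_2^\flat\wedge e_3^\flat)=e_0^\flat\wedge e_1^\flat$. Applying $*$ again and using $**=-\mathrm{id}$ on 2-forms in 4-D Lorentz signature gives $*(e_0^\flat\wedge e_1^\flat)=-e_2^\flat\wedge e_3^\flat$, whence $*F_E=-\tilde u\,e_2^\flat\wedge e_3^\flat=-F_M$ when $u=\tilde u$. The main subtle point I expect to have to argue carefully is the symmetry observation in the first paragraph, namely that the very same closedness condition $dH^\flat+d\tilde H^\flat=0$ certifies both foliations as field-sheet foliations; everything else is either a direct quotation from earlier theorems or a one-line algebraic manipulation.
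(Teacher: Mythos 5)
Your proposal is correct, and for the existence of $F_M$, the identification $u=\tilde u$, and the Hodge-duality statement it follows essentially the paper's own path: the paper's proof simply notes that $u$ and $\tilde u$ satisfy the same equation (eqs.(\ref{eqforu}) and (\ref{eqforutilde})), that involutivity of ${\rm span}\{e_0,e_1\}$ lets Theorem \ref{Mdomthm} apply, and that $*\,e_0^\flat\wedge e_1^\flat=-e_2^\flat\wedge e_3^\flat$. Your opening observation --- that the single condition $dH^\flat=-d\tilde H^\flat$ certifies both foliations at once because $H$ and $\tilde H$ are frame quantities not redefined in the electric case --- is exactly the symmetry the paper exploits implicitly. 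Where you genuinely diverge is the vacuum claim. The paper argues it abstractly: since $F_E=-*F_M$ and both $F_M$ and $F_E$ satisfy the homogeneous equation $dF=0$, it follows that $d*F_M=0$ and $d*F_E=0$, so both currents vanish ("if both $F$ and $*F$ are solutions, they are vacuum"). You instead compute the currents explicitly from eqs.(\ref{magj2}) and (\ref{elecj2}) and kill them using involutivity of the two complementary distributions; for $j_M$ this is precisely the corollary following Theorem \ref{jmthm2}, since ${\rm span}\{e_2,e_3\}$ is tangent to the leaves of ${\cal F}^{\;2R}$ and hence automatically involutive. Both routes are valid: the paper's is shorter and makes transparent why duality alone forces the vacuum, while yours is more self-contained and replaces the paper's unproved one-line assertion with a direct verification from the current formulas.
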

\begin{proof}
Since $u$ and $\tilde u$ satisfy the same equation, all that remains is that $e_0$ and $e_1$ form an involutive distribution. Then theorem \ref{Mdomthm} applies. Moreover
$$* \;e_0 ^\flat \wedge e_1 ^\flat = - e_2 ^\flat \wedge e_3 ^\flat\;.$$
Therefore,
$$* F_E = -F_M\;.$$
If both $F$ and $*F$ are solutions to Maxwell's equations, then they describe vacuum solutions.
$\blacksquare$
\end{proof}
In the same way, magnetically dominated solutions can have dual electrically dominated solutions as well. But, again this is a case of vacuum fields related by the Hodge star operator. The result analogous to theorems \ref{jmthm1} and \ref{jmthm2} in the electrically dominated case is given by the following theorem. Owing to its similarity to the magnetically dominated case, we state the result without proof.
\vskip0.2in
\begin{theorem}
Given a field sheet foliation ${\cal F}^{\;2R}$ for an electrically dominated force free field, in a foliation adapted frame ${\bf F}_{2R}$, the current density vector generating the field is given by 
\begin{equation}
j_E=  \tilde u\; de_2 ^\flat(e_0, e_1) \;e_2 + \tilde u\; de_3 ^\flat(e_0, e_1)\; e_3\;,
     \label{elecj1}
\end{equation}
or equivalently
\begin{equation}
  j_E= -\tilde u\; g( [e_0, e_1], e_2)\;e_2 - \tilde u\; g( [e_0,e_1], e_3 )\; e_3\;. 
  \label{elecj2}
\end{equation}
\end{theorem}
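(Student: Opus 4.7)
The plan is to mirror, step for step, the two-stage derivation of Theorems \ref{jmthm1} and \ref{jmthm2}, but with the roles of $F$ and $*F$ swapped. Because $F_E=\tilde u\, e_0^\flat\wedge e_1^\flat$, the identity $*\,e_0^\flat\wedge e_1^\flat=-e_2^\flat\wedge e_3^\flat$ (recorded in the proof of Theorem \ref{vacdual}) gives $*F_E=-\tilde u\, e_2^\flat\wedge e_3^\flat$. I would expand $d*F_E$ in the adapted frame by writing out $d\tilde u$, $de_2^\flat$ and $de_3^\flat$ in the basis of 1-forms, and use the lemma preceding Theorem \ref{Mdomthm} to turn the resulting $de_i^\flat(e_i,e_j)$ pieces into $g(\nabla_{e_i}e_i,e_j)$. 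This produces a 3-form with four candidate components, one for each basis 3-form.

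Next, I would invoke the force-free wedge conditions of \eqref{FFwedgecond}, which in the electric case read $J\wedge e_0^\flat=0=J\wedge e_1^\flat$, since now $\alpha,\beta$ correspond to $e_0^\flat,e_1^\flat$. These are precisely the identities that came out of the proof of Theorem \ref{Edomthm} (analogues of \eqref{magdomeq3} and \eqref{magdomeq4}), and they tie $e_0(\ln\tilde u)$ and $e_1(\ln\tilde u)$ to the components of $\tilde H^\flat$ in a way that exactly cancels the $e_0^\flat\wedge e_2^\flat\wedge e_3^\flat$ and $e_1^\flat\wedge e_2^\flat\wedge e_3^\flat$ pieces of $d*F_E$. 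What survives are only the two components along $e_0^\flat\wedge e_1^\flat\wedge e_2^\flat$ and $e_0^\flat\wedge e_1^\flat\wedge e_3^\flat$, with coefficients $\tilde u\,de_2^\flat(e_0,e_1)$ and $\tilde u\,de_3^\flat(e_0,e_1)$ respectively. Applying $j_E=*d*F_E$ and using the Minkowski Hodge-star to dualise each of these 3-forms into vectors along $e_2$ and $e_3$ yields \eqref{elecj1}. Note, as a sanity check, that the current lies in $\ker F_E=\mathrm{span}\{e_2,e_3\}$, as it must.

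For the equivalent form \eqref{elecj2}, I would repeat the torsion-free manipulation of Theorem \ref{jmthm2} verbatim: for $i=2,3$,
$$de_i^\flat(e_0,e_1)=g(\nabla_{e_0}e_1-\nabla_{e_1}e_0,e_i)=g([e_0,e_1],e_i),$$
where the second equality uses the vanishing of the Levi-Civita torsion, together with the identities $g(\nabla_{e_0}e_i,e_1)=-g(\nabla_{e_0}e_1,e_i)$ that arise because $(e_0,e_1,e_2,e_3)$ is orthonormal. Substituting into \eqref{elecj1} and chasing the overall sign delivers \eqref{elecj2}.

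The only real obstacle is bookkeeping: the extra minus sign from $*(e_0^\flat\wedge e_1^\flat)=-e_2^\flat\wedge e_3^\flat$, the signs produced when $*$ acts on 3-forms containing the timelike $e_0^\flat$, and the sign conventions relating $g(\nabla_{e_0}e_1,e_i)$ to $de_i^\flat(e_0,e_1)$ must all be carried consistently. Once these are tracked, the argument is structurally identical to the magnetically dominated proof, which is why the author elides it.
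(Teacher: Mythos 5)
Your strategy is the right one and is exactly what the paper intends by omitting the proof: transpose the two-stage argument of Theorems \ref{jmthm1} and \ref{jmthm2} with the roles of $F$ and $*F$ (hence of the pairs $(e_0,e_1)$ and $(e_2,e_3)$) exchanged, so that the force-free conditions $(d*F_E)\wedge e_0^\flat=0=(d*F_E)\wedge e_1^\flat$ now encode the $e_0$- and $e_1$-components of $d\ln\tilde u=2(H+\tilde H)^\flat$ and kill the $e_0^\flat\wedge e_2^\flat\wedge e_3^\flat$ and $e_1^\flat\wedge e_2^\flat\wedge e_3^\flat$ pieces of $d*F_E$, leaving only the components along $e_0^\flat\wedge e_1^\flat\wedge e_2^\flat$ and $e_0^\flat\wedge e_1^\flat\wedge e_3^\flat$; dualizing gives eq.(\ref{elecj1}). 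One sign in your final step is wrong as displayed: antisymmetrizing $\nabla_\mu (e_i)_\nu$ against $e_0^\mu e_1^\nu$ and using metric compatibility gives
$$de_i^\flat(e_0,e_1)=g(\nabla_{e_1}e_0-\nabla_{e_0}e_1,\,e_i)=g([e_1,e_0],e_i)=-g([e_0,e_1],e_i)\;,$$
not $+g([e_0,e_1],e_i)$; compare the paper's own computation $de_0^\flat(e_2,e_3)=g(e_0,[e_3,e_2])$ in the proof of Theorem \ref{jmthm2}, where the commutator likewise appears with its arguments reversed. With the identity as you wrote it, substitution into eq.(\ref{elecj1}) would yield the negative of eq.(\ref{elecj2}); with the corrected sign the two forms agree directly and there is no residual ``overall sign'' left to chase. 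Everything else in your outline is sound.
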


It is also not difficult to see that in the electrically dominated case, in any ${\bf F}_{2R}$, the energy momentum tensor is given by
$$T = \tilde u^2 \left[e_0 \otimes e_0 - e_1 \otimes e_1 + \frac{1}{2}\eta \right]\;.$$
\section{Global Results}
Let $\Lambda_q ({\cal M})$ denote that space of all rank $q$ differential forms on ${\cal M}$. Then the q-th De Rham cohomology group $H^q({\cal M})$ of ${\cal M}$ is defined by the quotient space
$$H^q({\cal M})= \frac{\ker (d: \Lambda_q ({\cal M}) \rightarrow \Lambda_{q+1} ({\cal M}))}{{\rm Im}\; (d:\Lambda_{q-1} ({\cal M}) \rightarrow \Lambda_q({\cal M}))}\;,$$
where ${\rm Im}$ is the image of a map.

\begin{theorem} Let ${\cal F}^{2L}$ be a field sheet foliation for a magnetically dominated force-free field in ${\cal M}$. Let $p \in {\cal M}$. Then if $H^1({\cal M})= \emptyset$, there exists a unique, force-free, magnetically dominated field on all of ${\cal M}$ such that
\begin{equation}
   F^2  (p) =  a^2 
   \label{uniFM}
\end{equation}
for some non-zero constant $a$. 
\end{theorem}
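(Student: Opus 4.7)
The strategy is to use Theorem \ref{Mdomthm} to produce $u$ locally and then to globalize it by invoking the triviality of $H^1(\mathcal{M})$.

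First, I would observe that the $1$-form $\omega := 2(H+\tilde H)^\flat$ is intrinsically associated to the foliation $\mathcal{F}^{2L}$: both $H$ and $\tilde H$ are defined as traces of the shape tensor over the tangent and normal distributions of $\mathcal{F}^{2L}$ respectively, so each is frame-independent and hence defined globally on $\mathcal{M}$. The field-sheet hypothesis gives $dH^\flat = -d\tilde H^\flat$, so $d\omega = 0$ on all of $\mathcal{M}$. Since $H^1(\mathcal{M}) = 0$, there is a smooth function $\phi : \mathcal{M} \to \mathbb{R}$ with $d\phi = \omega$, uniquely determined up to an additive constant.

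Next I would set $u := C\,e^{\phi/2}$ on all of $\mathcal{M}$; this is a globally defined positive function satisfying $d(\ln u) = 2(H+\tilde H)^\flat$, the equation (\ref{eqforu}) required by Theorem \ref{Mdomthm}. The constant $C$ is then fixed (up to sign) by the normalization condition: since $F = u\, e_2^\flat \wedge e_3^\flat$ in any ${\bf F}_{2L}$ gives $F^2 = 2u^2$, the requirement $F^2(p) = a^2$ forces $2C^2 e^{\phi(p)} = a^2$, which determines $|C|$ uniquely. Choose either sign of $C$ once and for all; the opposite sign produces $-F$, which is the same solution up to an overall sign and is absorbed into the ``uniqueness up to a constant factor'' of Theorem \ref{Mdomthm}.

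It remains to assemble $F$ as a global $2$-form. Cover $\mathcal{M}$ by starlike open sets $U_i$ on each of which there exists an adapted frame ${\bf F}_{2L}^{(i)} = (e_0^{(i)},e_1^{(i)},e_2^{(i)},e_3^{(i)})$, and set $F_i := u\, (e_2^{(i)})^\flat \wedge (e_3^{(i)})^\flat$ on $U_i$, using the single global $u$ just constructed. By Theorem \ref{Mdomthm} applied on $U_i$, each $F_i$ is a magnetically dominated force-free field whose kernel is the foliation distribution. On $U_i \cap U_j$, the two frames differ by a transformation as in (\ref{23rot})--(\ref{12rot}); the $O(1,1)$ part acts trivially on $(e_2)^\flat \wedge (e_3)^\flat$, while an $O(2)$ rotation in the $(e_2,e_3)$-plane leaves this $2$-form invariant up to a sign equal to $\det O(2) = \pm 1$. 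Hence $F_i$ and $F_j$ agree on $U_i \cap U_j$ up to a sign that is locally constant on overlaps; by connectedness of $\mathcal{M}$ and the fixed value $F^2(p)=a^2$ (together with our chosen sign of $C$), these signs can be propagated consistently, producing a single global $F$. Uniqueness follows because any other such global solution must agree with ours locally up to a constant on each chart by Theorem \ref{Mdomthm}, and the normalization at $p$ kills that constant.

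The main obstacle is the last step: showing that the local sign ambiguity in the choice of oriented normal frame never obstructs global patching. This reduces to orientability of the rank-$2$ normal bundle of $\mathcal{F}^{2L}$, which one should argue either from the fact that $\mathcal{M}$ and the foliation tangent bundle are orientable (standard for spacetimes and ruled out otherwise by the hypothesis that $F$ is to exist globally), or directly from connectedness plus the fixed-sign initial datum at $p$. Once this orientation issue is handled, all remaining steps are routine consequences of Theorem \ref{Mdomthm} combined with the Poincaré-type argument provided by $H^1(\mathcal{M}) = 0$.
$\blacksquare$
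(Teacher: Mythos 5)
Your proof follows essentially the same route as the paper's: the field-sheet condition makes $2(H+\tilde H)^\flat$ a globally defined closed $1$-form, triviality of $H^1({\cal M})$ supplies a global primitive and hence a global $u$, and connectedness plus the normalization $F^2(p)=a^2$ fixes the remaining multiplicative constant. The only slip is the exponent — you need $u = Ce^{\phi}$, not $Ce^{\phi/2}$, to get $d(\ln u) = d\phi = 2(H+\tilde H)^\flat$ — and your closing paragraph on orientability of the normal bundle addresses a patching subtlety that the paper's own (much terser) proof passes over silently.
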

\begin{proof}
Since the function $u$ does not depend on the adapted frame, for a global solution to exist, there must exist a smooth function $u$ such that (eq.(\ref{eqforu}))
$$ d (\ln u) = 2 (H+\tilde H)^\flat\;.$$
But,
$$d (H+\tilde H)^\flat = 0\;.$$
Then $H^1({\cal M})= \emptyset$ implies that there is a global solution for $u$. Now, if $u_1$ and $u_2$ are two such non-trivial solutions, since ${\cal M}$ is connected
$$u_1 = c\; u_2$$ for some non-zero constant $c$. Then, eq.(\ref{uniFM}) fixes $c$ to be $1$.
$\blacksquare$
\end{proof}
For the existence of global non-null force-free solutions, it is not necessary that $H^1({\cal M})= \emptyset$. It is merely a sufficient requirement. 
Owing to its similarity to the previous case, the analogous results for a global electrical dominated field is stated below without proof.

\begin{theorem} Let ${\cal F}^{2R}$ be a field sheet foliation for an electrically dominated force-free field in ${\cal M}$. Let $p \in {\cal M}$. Then if $H^1({\cal M})= \emptyset$, there exists a unique, force-free, electrically dominated field on all of ${\cal M}$ such that
\begin{equation}
   F^2  (p) =  -a^2 
   \label{uniFE}
\end{equation}
for some non-zero constant $a$. 
\end{theorem}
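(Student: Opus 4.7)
The plan is to mirror the proof of the analogous magnetically dominated global theorem, since Theorem \ref{Edomthm} already supplies the local identity $d(\ln \tilde u) = 2(H+\tilde H)^\flat$ and the field sheet hypothesis for ${\cal F}^{2R}$ is exactly eq.(\ref{H+tilH}). The only real work is to promote the local existence of $\tilde u$, guaranteed on starlike neighborhoods by the Poincar\'e lemma, to a global existence on ${\cal M}$.

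First, I would note that under the field sheet hypothesis $dH^\flat + d\tilde H^\flat = 0$, so the 1-form $\omega := 2(H+\tilde H)^\flat$ is closed on all of ${\cal M}$. The assumption $H^1({\cal M}) = \emptyset$ then forces $\omega$ to be exact, so there exists a smooth function $\phi$ on ${\cal M}$ with $d\phi = \omega$. Setting $\tilde u := e^\phi$ produces a nowhere-vanishing smooth function satisfying eq.(\ref{eqforutilde}) globally. Because $\tilde u$ is frame-independent (it is determined, up to sign, by the geometric invariant $F^2 = -2\tilde u^2$), the locally defined two-forms $\tilde u\, e_0^\flat \wedge e_1^\flat$ in different adapted frames ${\bf F}_{2R}$ agree on overlaps and patch into a well-defined global $F$ on ${\cal M}$. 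By Theorem \ref{Edomthm}, this $F$ is an electrically dominated, force-free solution in every adapted chart, hence everywhere on ${\cal M}$.

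Next I would fix the normalization. The condition $F^2(p) = -a^2$ together with $F^2 = -2\tilde u^2$ determines $|\tilde u(p)|$ and therefore pins down the additive constant inherent to $\phi$. For uniqueness, if $\tilde u_1$ and $\tilde u_2$ are two such non-trivial solutions, both $\ln\tilde u_1$ and $\ln\tilde u_2$ are primitives of the same closed 1-form $\omega$ on the connected manifold ${\cal M}$, so they differ by a constant; equivalently $\tilde u_1 = c\,\tilde u_2$ for a non-zero constant $c$. The normalization eq.(\ref{uniFE}) at $p$ then forces $c = 1$, exactly as in the magnetically dominated case.

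The main obstacle is not computational but conceptual: one must verify that $\tilde u$ is genuinely a scalar on ${\cal M}$ rather than a collection of frame-dependent local functions, since Theorem \ref{Edomthm} is stated only locally. This is handled by the observation above that $\tilde u^2 = -F^2/2$ is intrinsic, so the only ambiguity across charts is a global sign, and this sign can be fixed once and for all by continuity on the connected manifold ${\cal M}$. Everything else reduces to the de Rham-theoretic observation that a closed 1-form on a space with trivial first cohomology admits a global primitive.
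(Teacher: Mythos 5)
Your argument is essentially the paper's own: the paper states this electric-case theorem without proof, deferring to the magnetically dominated version, whose proof is exactly your de Rham argument (the field-sheet condition makes $2(H+\tilde H)^\flat$ closed, $H^1({\cal M})=\emptyset$ gives a global primitive for $\ln\tilde u$, and connectedness plus the normalization $F^2(p)=-a^2$ forces the multiplicative constant to be $1$). Your additional remarks on patching the local two-forms and fixing the sign of $\tilde u$ via $F^2=-2\tilde u^2$ only make explicit what the paper leaves implicit, so the proposal is correct and follows the same route.
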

\section{Illustrative Examples}
Exact solutions to force-free electrodynamics are not easily obtained since the governing equations are nonlinear. Moreover, this paper is not intended to necessarily generate new solutions, but rather to promote an understanding of the theoretical aspects of FFE when the solution is non-null. Nonetheless, to illustrate the theory developed above, we revisit and reformulate the only two known non-null force-free class of solutions in a Kerr background. Both solutions will be presented in the Boyer-Lindquist coordinates of the exterior Kerr geometry.
In the Boyer-Lindquist coodinate system $(t,r,\theta,\varphi)$, the Kerr metric takes the form:
$$ds^2 = g_{tt} \;dt^2 + 2 \;g_{t\varphi}\; dt\; d\varphi+ \gamma_{rr}\; dr^2 +
\gamma_{\theta \theta}\; d\theta^2 + \gamma_{\varphi \varphi}\;d\varphi^2,$$
where
$$g_{tt} =  -1 + \frac{2Mr}{\rho^2}, \;\;\; g_{t \varphi}  = \frac{-2Mr a \sin^2\theta}{\rho^2},$$
$$\gamma_{rr} = \frac{\rho^2}{\Delta}\;,\;\;\;\gamma_{\theta \theta} = \rho^2, \;\;\; \gamma_{\varphi \varphi} = \frac{\Sigma^2 \sin^2\theta}{\rho^2}\;.$$
Here
$$ \rho^2 = r^2 + a^2 \cos^2\theta \;,\;\;\Delta = r^2 -2 M r + a^2 \;,$$
and
$$\Sigma^2 = (r^2 + a^2)^2 -\Delta \; a^2 \sin^2\theta.$$
\subsection{Example A}
The original derivation of the class of solutions presented here can be found in \cite{MenonTetrad15}. In \cite{Menon_FFN20} we showed that null foliations of spacetime admitting an equipartition of null mean curvature allows for a class of null force-free solutions. However, in the non-null case this is not true. So you might wonder what we mean by a class of solutions in this example. In this example, we will construct a class of field sheet foliations for a magnetically dominated force-free field, and for each member of the class of foliations, we will have a unique solution. Consider the inertial tetrad in the exterior Kerr geometry in Boyer-Lindquist coordinates given by
\begin{subequations}
\begin{eqnarray}
  e_0 =\frac{1}{\sqrt{\rho^2 \Delta}} \Big[(r^2+a^2) \partial_t + a \partial_\varphi\Big]\;,  \\
  e_1 =\frac{\Big[a \sin^2\theta \partial_t +L \sin\theta \partial_\theta + \partial_\varphi\Big]}{\sqrt{\rho^2 (1+L^2) }\sin\theta} \;,\\
 e_2 = \sqrt{\frac{\Delta}{\rho^2}}\; \partial_r\;,\\
 e_3 =\frac{\Big[L(a \sin^2\theta  \partial_t +  \partial_\varphi)- \sin\theta \;\partial_\theta \Big]}{\sqrt{\rho^2 (1+L^2)}\sin\theta}\;.
\end{eqnarray}
\label{F2L1}
\end{subequations}
Here
$$L = L(r) = \frac{F}{\sqrt{C^2 - F^2}}\;,$$ where $F$ is an arbitrary function of $r$, and $C$ is a constant.
Clearly $(e_0, e_1)$ forms an involutive distribution. They do not however commute. This is one of the advantages of the formalism developed in this paper, as compared to \cite{Menon_FF20}, wherein we require an adapted chart that necessitates a commuting set of basis vector fields.
By construction, $(e_0, e_1, e_2, e_3)$ forms a Lorentzian foliation adapted frame ${\bf F}_{2L}$. We are looking to construct a magnetically dominated force-free field $F$ of the type given in  eq.(\ref{FrameF}). A straightforward calculation shows that
\begin{equation}
    g(\nabla_{e_0} e_0, e_3) =\frac{a^2 \cos \theta \sin \theta}{\rho^2\sqrt{\rho^2 (1+L^2) }} \;,
    \label{term1}
\end{equation}
\begin{equation}
    g(\nabla_{e_0} e_0, e_2) =\frac{a^2 \cos^2 \theta (r-M) + r (Mr-a^2)}{\rho^2\sqrt{\rho^2 \Delta }} \;,
    \label{term2}
\end{equation}
\begin{equation}
    g(\nabla_{e_1} e_1, e_3) =\frac{ \cos \theta (r^2+a^2)}{\rho^2\sqrt{\rho^2 (1+L^2) }\sin \theta} \;,
    \label{term3}
\end{equation}
\begin{equation}
    g(\nabla_{e_2} e_2, e_1) =\frac{ a^2 \cos \theta \sin\theta L}{\rho^2\sqrt{\rho^2 (1+L^2) }} \;,
    \label{term4}
\end{equation}
\begin{equation}
    g(\nabla_{e_3} e_3, e_1) =\frac{ -\cos \theta (r^2+a^2)L}{\rho^2\sqrt{\rho^2 (1+L^2) }\sin \theta} \;,
    \label{term5}
\end{equation}
\begin{equation}
    g(\nabla_{e_1} e_1, e_2) =-\frac{r}{\rho^2}\sqrt{\frac{\Delta}{\rho^2}} \;,
    \label{term6}
\end{equation}
and finally
\begin{equation}
    g(\nabla_{e_2} e_2, e_0) =0= g(\nabla_{e_3} e_3, e_0) \;.
    \label{term7}
\end{equation}
Then from eqs.(\ref{Hdef}), (\ref{Htildedef}),(\ref{eqforu}), and (\ref{term1})-(\ref{term7}) , we get that
$$d (\ln u)  = \frac{ \cot \theta }{\sqrt{\rho^2 (1+L^2) }} (e_3 ^\flat-Le_1 ^\flat ) - \frac{(r-M)}{\sqrt{\rho^2 \Delta }} e_2 ^\flat $$
$$= -\cot \theta \;d\theta-\frac{(r-M)}{\Delta} dr\;.$$
The above equation is easily integrated to obtain a final expression for $u$ given by
$$u =\frac{u_0}{\sin \theta \sqrt{\Delta}}\;,$$
where $u_0$ is an arbitrary integration constant.
When the above equation is inserted into eq.(\ref{FrameF}), and is expressed in terms of the Boyer-Lindquist coordinates we get that the magnetically dominated field in this case is given by
\vskip0.2in \noindent
$F_M=$
\begin{equation}
   \frac{u_0}{\Delta} dr \wedge \left[\frac{\rho^2\sqrt{C^2-F^2}}{\sin \theta} d\theta + F( a \;dt - (r^2+a^2) d\varphi)\right]\;. 
   \label{FM1sol}
\end{equation}
This is precisely the magnetically dominated solution presented in \cite{MenonTetrad15} albeit in a $3+1$ formalism of electrodynamics using electric and magnetic fields.  
As per theorem \ref{vacdual}, there is a dual vacuum solution generated by the above $F_M$ in the event that $e_2$ and $e_3$ forms an involutive distribution. This is the case when $L$ and consequently $F$ is a constant. When $F$ is a constant, the dual vacuum electrically dominated solution is given by
\vskip0.2in \noindent
$F_E= -*F_M=$
$$\frac{u_0}{\sqrt{1+L^2}} dt \wedge d\varphi -\frac{u_0}{\sin \theta}\; d\theta \wedge \left(-dt + a \sin^2\theta d\varphi-\frac{\rho^2}{\Delta} dr\right)\;.$$
It is curious to note that $F_E$ is comprised of three separate vacuum solutions in Kerr geometry. In \cite{Menon_FF20}, it was shown that, 
$$u_0 \frac{ \rho^2}{\sin \theta \Delta}\; d\theta \wedge  dr\;,$$
and its Hodge-star dual, 
$$q\; dt \wedge d\varphi $$
for a constant $q$, are vacuum solutions is Kerr geometry. 
Further
\begin{equation}
   \frac{u_0}{\sin \theta}\; d\theta \wedge (-dt + a \sin^2\theta d\varphi)
   \label{vacnul}
\end{equation}
is the vacuum limit of the outgoing null solution presented below. Since vacuum solutions to electrodynamics are linear even in curved spacetime, we get an additional vacuum solution in Kerr geometry from eq.(\ref{FM1sol}) given by
$$\frac{u_0}{\Delta} dr \wedge ( a \;dt - (r^2+a^2)\; d\varphi)\;.$$
As far as the author is aware, this is the first time the above solution has been presented in the literature.
\subsection{Example B}
The following example has been previously analyzed in \cite{Minout}. The non-null solution presented here is a linear combination of the infalling null solution derived in \cite{MD07} and its outgoing extension in \cite{Minout}. Since FFE is a non-linear theory, linear combinations of solutions are not new FFE solutions in general. However, under certain circumstances, this can indeed be the case. For the case of our example, the precise condition for this to happen is given by eq.(\ref{distcond}). The presentation here is novel and does not follow the path described in the original derivation.

The infalling and outgoing principal null geodesics of the Kerr geometry in Boyer-Lindquist coordinates are given by
$$n = \frac{r^2+a^2}{\Delta} \partial_t -\partial_r + \frac{a}{\Delta} \partial_\varphi$$
and
$$l = \frac{r^2+a^2}{\Delta} \partial_t +\partial_r + \frac{a}{\Delta} \partial_\varphi\;.$$
Let $U(\theta)$ and $V(\theta)$ be smooth functions such that
\begin{equation}
UV = \frac{C}{\sin^2 \theta}\;, 
  \label{distcond}
\end{equation}
where $C$ is a negative constant.
For ease of calculation, set
$$\chi = \frac{1}{2} \sqrt{\frac{\Delta}{\rho^2}} \frac{1}{\sqrt{-U V}}\;.$$
Further, define
\begin{subequations}
\begin{eqnarray}
  e_0 =\chi\; ( U n - V l)\;,  \\
  e_1 =\frac{1}{\sin \theta} \frac{1}{\sqrt{\rho^2}} (a \sin^2 \theta \partial_t + \partial_\varphi)\;,\\
 e_2 = \frac{1}{\sqrt{\rho^2}} \partial_\theta\;,\\
 e_3 =\chi\;( U n + V l)\;.
\end{eqnarray}
\label{2F2L1}
\end{subequations}
Then, eq.(\ref{distcond}) gives us that $[e_0, e_1] = 0\;.$
Therefore $(e_0, e_1, e_2, e_3)$ forms a Lorentzian foliation adapted frame ${\bf F}_{2L}$. 
We are looking to construct a magnetically dominated force-free field $F$ of the type given in  eq.(\ref{FrameF}).
The relevant terms in $H$ and $\tilde H$ are easily calculated, and are given by
\begin{equation}
    g(\nabla_{e_2} e_2, e_1) =0\;,
    \label{2term1}
\end{equation}
\begin{equation}
    g(\nabla_{e_2} e_2, e_0) =\frac{r}{\rho^2}\; \chi\; (U+V)\;,
    \label{2term2}
\end{equation}
\begin{equation}
    g(\nabla_{e_1} e_1, e_2) =-\frac{\cot \theta (r^2+a^2)}{(\rho^2)^{3/2}}\;,
    \label{2term3}
\end{equation}
\begin{equation}
    g(\nabla_{e_1} e_1, e_3) =-\frac{r}{\rho^2}\; \chi\; (V-U)\;,
    \label{2term4}
\end{equation}
\begin{equation}
  g(\nabla_{e_0} e_0, e_2)  = -\frac{a^2 \cos \theta \sin \theta}{(\rho^2)^{3/2}}\;,
  \label{2term5}
\end{equation}
\begin{equation}
  g(\nabla_{e_0} e_0, e_3)  = 2 \chi^2 e_3\left(\frac{UV \rho^2}{\Delta}\right)\;,
  \label{2term6}
\end{equation}
\begin{equation}
  g(\nabla_{e_3} e_3, e_0 ) =- 2 \chi^2 e_0\left(\frac{UV \rho^2}{\Delta}\right) \;,
  \label{2term7}
\end{equation}
and
\begin{equation}
  g(\nabla_{e_3} e_3, e_1 ) = 0\;.
  \label{2term8}
\end{equation}
Eqs. (\ref{2term1})-(\ref{2term8}), eq.(\ref{Hdef}) and eq.(\ref{Htildedef}) now gives us that
$$H + \tilde H =\left[2 \chi^2 e_0\left(\frac{UV \rho^2}{\Delta}\right)-\frac{r \chi}{\rho^2} (U+V)\right] e_0$$
$$+\frac{1}{(\rho^2)^{3/2}} \Big[a^2 \cos \theta \sin \theta - \cot \theta (r^2+a^2)\Big]e_2$$
\begin{equation}
   -\left[2 \chi^2 e_3\left(\frac{UV \rho^2}{\Delta}\right)+\frac{r \chi}{\rho^2} (V-U)\right] e_3\;.
   \label{exzactmagsol}
\end{equation}
Set 
\begin{equation}
   u = 2\sqrt{\frac{-UV}{\Delta}}\;.
   \label{defu}
\end{equation}
\vskip0.2in
\begin{claim} $u$ as given above and $H + \tilde H$ as given in eq.(\ref{exzactmagsol}) satisfies the following required relation
$$d \ln u = (H + \tilde H)^\flat\;.$$
\end{claim}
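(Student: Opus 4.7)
Because both sides are 1-forms, my strategy is to compute each side in the Boyer-Lindquist coordinates and in the adapted frame $(e_0,e_1,e_2,e_3)$, then verify equality component-by-component. On the left, since $UV = C/\sin^2\theta$ with $C<0$ and $\Delta = \Delta(r)$, the function $u = 2\sqrt{-UV/\Delta}$ depends only on $r$ and $\theta$, giving
\[
\ln u = \ln 2 + \tfrac{1}{2}\ln(-C) - \ln\sin\theta - \tfrac{1}{2}\ln\Delta,
\qquad
d\ln u = -\cot\theta\, d\theta - \frac{r-M}{\Delta}\, dr.
\]

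The component-wise verification then proceeds as follows. The vector $e_1$ is a combination of $\partial_t$ and $\partial_\varphi$ only, so $e_1(\ln u)=0$, which matches the absence of an $e_1$-term on the right of eq.(\ref{exzactmagsol}). For $e_2 = \rho^{-1}\partial_\theta$ one gets $e_2(\ln u) = -\cot\theta/\sqrt{\rho^2}$, and on the right the algebraic identity $r^2+a^2 = \rho^2+a^2\sin^2\theta$ collapses the bracket $a^2\cos\theta\sin\theta - \cot\theta(r^2+a^2)$ to $-\cot\theta\cdot\rho^2$, matching after division by $(\rho^2)^{3/2}$. The $e_0$- and $e_3$-checks are more delicate: using $n(f) = -\partial_r f$ and $l(f) = \partial_r f$ for any $f=f(r,\theta)$, I would obtain $e_0(\ln u) = \chi(U+V)(r-M)/\Delta$ and $e_3(\ln u) = \chi(U-V)(r-M)/\Delta$. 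On the right I would apply the same rule to $UV\rho^2/\Delta$, use $\chi^2 = \Delta/[4\rho^2(-UV)]$, and split the resulting expression into a piece proportional to $r/\rho^2$ that cancels the bare $\chi(U\pm V)r/\rho^2$ terms and a piece proportional to $(r-M)/\Delta$ that reproduces the desired $e_0$- and $e_3$-components.

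The main obstacle is managing the several sign flips in this last step: those coming from $\partial_r\Delta = 2(r-M)$, from the product rule applied to $UV\cdot\rho^2/\Delta$, and from the timelike character of $e_0$ (which flips a sign in passing from a vector component to its $^\flat$-component). Once these are tracked carefully the claim reduces to a clean algebraic identity, and no new geometric input beyond Theorem \ref{Mdomthm} is required --- the content of the claim is simply the computational verification that the adapted frame (\ref{2F2L1}) together with the stated $u$ satisfies the hypothesis of that theorem.
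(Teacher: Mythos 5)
Your proposal is correct and follows essentially the same route as the paper's own proof: a component-by-component check in the adapted frame, with $e_1(\ln u)=0$ trivially, the identity $r^2+a^2=\rho^2+a^2\sin^2\theta$ collapsing the $e_2$-component, and the action of $n,l$ on functions of $(r,\theta)$ (via $n(f)=-\partial_r f$, $l(f)=\partial_r f$, together with $\chi^2=-\Delta/(4\rho^2 UV)$) handling the $e_0$- and $e_3$-components, where the $r/\rho^2$ terms cancel and the $(r-M)/\Delta$ terms survive exactly as you describe. Your stated values $e_0(\ln u)=\chi(U+V)(r-M)/\Delta$ and $e_3(\ln u)=\chi(U-V)(r-M)/\Delta$, and the sign flip from $g(e_0,e_0)=-1$ in lowering the index, all check out.
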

\begin{clproof}
Clearly
$$d \ln u (e_1) = 0=(H + \tilde H)^\flat (e_1)\;.$$
To check the $e_2$ component of the equation note that
$$\partial_\theta \ln u = \frac{1}{\sqrt{-UV \Delta}} \;\partial_\theta (-UV)$$
$$= -\frac{C}{\sqrt{-UV \Delta}} \; \partial_\theta \csc^2 \theta= -2 \sqrt{\frac{-UV}{\Delta}} \cot \theta = -u \cot \theta\;.$$
It is the equation above that necessitated the requirement in eq.(\ref{distcond}).
Thus we see that
$$e_2 \ln u = - \frac{\cot \theta}{\sqrt{\rho^2}}\;.$$
On the other hand, from eq.(\ref{exzactmagsol})
\vskip0.2in
$(H + \tilde H)^\flat (e_2) =$
$$ \frac{1}{(\rho^2)^{3/2}} \Big[a^2 \cos \theta \sin \theta - \cot \theta (r^2+a^2)\Big]= - \frac{\cot \theta}{\sqrt{\rho^2}}\;. $$
Therefore,
$$d \ln u (e_2) = (H + \tilde H)^\flat (e_2)\;.$$
We will now show the equality in the $e_3$ component of the equation in the claim.
$$e_3 \ln u =  \chi \;(Un + V l) \;\ln \left(2\sqrt{\frac{-UV}{\Delta}}\right)$$
$$= \chi \;(Un + V l) \;\left[\frac{1}{2}\ln \left(\frac{-UV}{\Delta}\right) + 2\right]$$
$$= \frac{\chi \Delta }{2 } \;(Un + V l) \; \left(\frac{1}{\Delta}\right)\;.$$
On the other hand, from eq.(\ref{exzactmagsol})
\vskip0.2in
$(H + \tilde H)^\flat (e_3)$
$$=  -2 \chi^2 \rho^2 e_3\left(\frac{UV}{\Delta}\right)-2 \chi^2 \frac{UV }{\Delta}e_3(\rho^2)-\frac{r \chi}{\rho^2} (V-U) $$
$$= \frac{\chi \Delta }{2 } \;(Un + V l) \; \left(\frac{1}{\Delta}\right)\;.$$
Therefore,
$$d \ln u (e_3) = (H + \tilde H)^\flat (e_3)\;.$$
The $e_0$ component of the equation can be shown in a similar way.
$\blacksquare$
\end{clproof}
Therefore, from theorem \ref{Mdomthm}, the above claim, eqs.(\ref{2F2L1}) and eq.(\ref{FrameF}), we get that
\begin{equation}
  \bar F_M = d\theta \wedge (U n^\flat+V l^\flat) 
  \label{FM}
\end{equation}
is a magnetically dominated force-free field in Kerr geometry. We place a ``bar" over $F$ to distinguish it from the previous solution given by eq.(\ref{FM1sol}). The current density in this case is given by eq.(\ref{magj2}) yields
$$\bar j_M = -\frac{(U \cot\theta + U_{,\theta})}{\rho^2}\; n-\frac{(V \cot\theta + V_{,\theta})}{\rho^2}\; l\;.$$
In the event $C$ in eq.(\ref{distcond}) is positive, we get an analogous electrically dominated solution in a Kerr background, and is only known electrically dominated, non-vacuum, force-free solution in Kerr geometry. Both of these solutions have been previously analysed in \cite{Minout}.

However, when $U = A \csc \theta$ and $V = B \csc \theta$ for constant $A$ and $B$ such that $AB = C$, we get that $e_2$ and $e_3$ in eqs.(\ref{2F2L1}) forms an involutive distribution. In this case $\bar j_M =0$, and once again
$\bar F_E= -*\bar F_M$ is the dual, vacuum, electrically dominated solution. In the event $V =0$, in eq.(\ref{FM}), the formalism developed here does not apply. However, we recover the original  null in-falling solution in the Kerr background, and when $U=0$ we get the null outgoing solution. Further, when $U=0$ and $V= B \csc \theta$, eq.(\ref{FM}) reduces to eq.(\ref{vacnul}).

\section{Conclusion}
In a recent paper (\cite{Menon_FF20}), it was shown that force-free solutions to electrodynamics in an arbitrary spacetime are intimately connected to the existence of prescribed foliations of spacetime. This paper relied on the existence of a foliation adapted chart to establish this connection. The equations constraining the allowed foliations lacked geometric meaning. Although the results were necessarily covariant, it was not apparent from mere inspection. Nonetheless, the connection between FFE and foliations was no longer in doubt, and it was clear that a geometric formulation would be soon forthcoming. It was also clear that null and non-null solutions to FFE belonged to two very different categories of foliations. In \cite{Menon_FFN20}, the theory of null and force-free fields and its connection to foliations of spacetime was formulated in a geometric and covariant manner. This works, then, completes the task of recasting the remaining non-null FFE and its correspondence with foliations in a covariant formalism.

As was shown the main body of the paper, non-null field sheets imposed conditions on the mean curvature field of the foliations and its orthogonal distribution. Consequently, the correspondence between foliations and non-null FEE solutions was a 1-1 map modulo an integration constant. To illustrate the central theorems of this paper, two known solutions in a Kerr background were presented using this covariant formalism. As mentioned previously, this paper focuses on the theoretical structure behind non-null FFE. The formalism developed here can be explored geometrically to look for exact solutions of FFE. However, this is beyond the current scope of work and can be a topic of future study. Additionally, it was shown that modulo an integration constant, given a field sheet foliation, all solutions are locally unique. Existence of global unique solutions as a possibility would appear to be tractable, and this too is a possible topic of future study.

Using a  3+1 formalism, a crucial result by Komissarov was that the magnetically dominated force-free field is indeed governed by a hyperbolic set of equations (\cite{komi2002}). This certainly implies that in a given spacetime, the equations of FFE in the magnetically dominated case implies the existence of field sheet foliations for a magnetically dominated force-free field. However, it is not yet clear how this connection might unfold. This task would be a non-trivial exercise since field sheets in this case will not be contained in a single Cauchy hypersurface. Such a result will involve a significant amount of geometric analysis. With optimism, this too can be relegated as a topic of future study.

\bibliography{bibliography}

\end{document}